\title{\bf General condition of quantum teleportation \\ by one-dimensional quantum walks}
\author{Tomoki Yamagami \thanks{Department of Information Physics and Computing, Graduate School of Information Science and Technology, The University of Tokyo, Tokyo, 113-8656, Japan.}
\and Etsuo Segawa \thanks{Graduate School of Environment and Information Sciences, Yokohama National University
Yokohama, 240-8501, Japan.}
\and Norio Konno \thanks{Department of Applied Mathematics, Faculty of Engineering, Yokohama National University, Yokohama, 240-8501, Japan.}
}
\newcommand{\vr}
{v_R}
\newcommand{\vl}
{v_L}
\newcommand{\T}
{^{\rm T}}
\newcommand{\twobytwo}[4]
{\left[ \begin{array}{cc}
#1 & #2 \\
#3 & #4
\end{array} \right]}
\newcommand{\threebythree}[9]
{\left[ \begin{array}{ccc}
#1 & #2 & #3 \\
#4 & #5 & #6 \\
#7 & #8 & #9
\end{array} \right]}
\newcommand{\onebytwo}[2]
{\left[ \begin{array}{c} #1 \\ #2 \end{array} \right]}
\newcommand{\twovec}[2]
{\left[ \begin{array}{cc}
#1  \\
#2 
\end{array} \right]}
\newcommand{\Hp}
{\mathcal{H_{\rm P}}}
\newcommand{\Hcn}
{\mathcal{H_{\rm C}}}
\newcommand{\Hc}[1]
{\mathcal{H_{\rm C}^{\rm (#1)}}}
\newcommand{\agmnt}
{{\rm arg}}
\newcommand{\bfit}[1]
{\mbox{\boldmath $#1$}}
\newcommand{\rrule}
{\vrule width 0pt
height13pt depth 8pt}
\newcommand{\rrrule}
{\vrule width 0pt
height20pt depth 15pt}
\newtheorem{proposition}{Proposition}
\newtheorem{definition}[proposition]{Definition}
\newtheorem{theorem}[proposition]{Theorem}
\newtheorem{remark}[proposition]{Remark}
\newtheorem{lemma}[proposition]{Lemma}
\newtheorem{corollary}[proposition]{Corollary}
\begin{document}
\maketitle

\noindent{\bf Abstract}\quad
We extend the scheme of quantum teleportation by quantum walks introduced by Wang et al. (2017). First, we introduce the mathematical definition of the accomplishment of quantum teleportation by this extended scheme. Secondly, we show a useful necessary and sufficient condition that the quantum teleportation is accomplished rigorously.  Our result classifies the parameters of the setting for {the accomplishment of quantum teleportation}.

\section{Introduction}\label{intro}
Quantum walk is considered as a quantum analogue of random walk. This model was first introduced in the context of quantum information theory such as Aharonov et al. \cite{ADZ93} and Ambainis et al. \cite{ABNV01}. Since then, quantum walk is treated as an interesting model in the field of mathematics and information theory \cite{CFG02, K03, Kn02, VA08, VA12} and expected of its application \cite{P13, KI19}. Quantum walk is capable of universal quantum computation and able to be implemented by the physical system in various ways \cite{C09, LCETK10, CGW13, KRBD10}, which is why the model is considered to be expectable one.\par
On the other hand, quantum teleportation is a communication protocol that transmits a quantum state from one place to another. It is first introduced by Bennett et al. \cite{BBCJPW93} and regarded as not only a system for communication but also the basis of quantum computation \cite{TMFLF13}.\par
Recently, the works on applications of quantum walks to quantum teleportation {\cite{WSX17, SWLR19, LCWHL19, ZYLZ20} appear}. In previous quantum teleportation systems, they had to produce prior entangled states and carried on transmission with it. However, by using quantum walks, the walk itself has a role of entanglement, which makes teleportation simpler. In the previous study \cite{WSX17}, the concrete models of teleportation by quantum walks are shown, but the general condition where the scheme of teleportation succeeds is not shown. In this paper, we extend the scheme of quantum teleportation by quantum walks introduced by Wang et al. \cite{WSX17}. We introduce the mathematical definition of the accomplishment of quantum teleportation by this extended scheme. Then, we show a useful necessary and sufficient condition for it.  Our result classifies the parameters of the setting for the accomplishment of the quantum teleportation including Wang et al.'s settings.\par
 The rest of the paper is organized as follows. Section 2 gives the definition of our quantum walk model, and in Sect. 3 we give the scheme of teleportation by the quantum walk model. In Sect. 4, we present our main theorem of this paper and demonstrate some examples of the theorem.  Furthermore, Sect. 5 is devoted to the proof of the result. Finally, we give {a} summary and discussion in Sect. 6. 

\section{Quantum Walks}\label{sec:1}
Here, we introduce the quantum walks (QWs). First, we review a basic model of discrete QW and then introduce the QW applied to the scheme of quantum teleportation.

\subsection{The One-Coin Quantum Walks on One-Dimensional Lattice}\label{sec:2}
The one-dimensional quantum walk with one coin is defined in a compound Hilbert space of the position Hilbert space $\mathcal{H}_{\rm P} = {\rm span}\{ \ket{x} | x \in \mathbb{Z} \}$ and the coin Hilbert space $\mathcal{H}_{\rm C} = {\rm span}\{ \ket{R},\,\ket{L} \}$ with
\begin{eqnarray}
\ket{R} = \onebytwo{1}{0},\quad \ket{L} = \onebytwo{0}{1}. \nonumber
\end{eqnarray}
Note that $\Hcn$ is equivalent to $\mathbb{C}^2$. Then, the whole system is described by $\mathcal{H} = \mathcal{H}_{\rm P} \otimes \mathcal{H}_{\rm C}$. \par
Now, we define one-step time evolution of the quantum walk as $W = \hat{S} \cdot \hat{C}$, where $\hat{S}$ is a shift operator described by
\begin{eqnarray}
\hat{S} = S \otimes \ket{R}\bra{R} + S^{-1} \otimes \ket{L}\bra{L}\nonumber
\end{eqnarray}
with
\begin{eqnarray}
S = \sum_{x \in \mathbb{Z}} \ket{x+1}\bra{x},\nonumber
\end{eqnarray}
and $\hat{C}$ is a coin operator defined by
\begin{eqnarray}
\hat{C} = I_2 \otimes C,\nonumber
\end{eqnarray}
with
\begin{eqnarray}
I_2 = \twobytwo{\,1\,}{\,0\,}{\,0\,}{\,1\,},\quad C \in {\rm U}(2).\nonumber
\end{eqnarray}
Here, U($n$) is the set of $n\times n$ unitary matrices. 

\subsection{$m$-Coin Quantum Walks on One-Dimensional Lattice}\label{sec:MCQW}
To implement schemes of quantum teleportation based on quantum walks, we need to define quantum walks with many coins, which are determined on the whole system $\mathcal{H} = \Hp \otimes \Hcn^{\otimes m}$ with $m\geq n$ (the previous case was one coin QW).\par
Now, we define one-step time evolution of the $m$-coin quantum walk at time $n$ as $W_n = \hat{S}_n \cdot \hat{C}_n$, where $\hat{S}_n$ is a shift operator described by
\begin{eqnarray}\nonumber
\hat{S}_n &=& S \otimes \left(I_{2} \otimes \cdots \otimes I_{2} \otimes \overbrace{\ket{R}\bra{R}}^{n} \otimes I_{2} \otimes \cdots \otimes I_{2}\right) \nonumber \\
&& + S^{-1} \otimes \left(I_{2} \otimes \cdots \otimes I_{2} \otimes \overbrace{\ket{L}\bra{L}}^{n} \otimes I_{2} \otimes \cdots \otimes I_{2}\right),\nonumber
\end{eqnarray}
and $\hat{C}_n$ is the coin operator described by
\begin{eqnarray}
\hat{C}_n = I_{\infty} \otimes \left(I_{2} \otimes \cdots \otimes I_{2} \otimes \overbrace{C_n}^n \otimes I_{2} \otimes \cdots \otimes I_{2}\right).\nonumber
\end{eqnarray}
Here, ``$\overbrace{}^n$" means that the matrix corresponds to $n$th $\Hcn$ and $C_n \in {\rm U}(2)$.\par
Moreover, we put 
\begin{eqnarray}
P_n = \ket{L}\bra{L}C_n,\quad Q_n = \ket{R}\bra{R}C_n.\nonumber
\end{eqnarray}
We should note that $C_n = P_n + Q_n$.  Then, a quantum walker at time $n$ moves one unit to the left with the weight 
\begin{eqnarray}
I_{2} \otimes \cdots \otimes I_{2} \otimes \overbrace{P_n}^n \otimes I_{2} \otimes \cdots \otimes I_{2},\nonumber
\end{eqnarray}
or to the right with weight 
\begin{eqnarray}
I_{2} \otimes \cdots \otimes I_{2} \otimes \overbrace{Q_n}^n \otimes I_{2} \otimes \cdots \otimes I_{2}.\nonumber
\end{eqnarray}
In other words, for $n\in \mathbb{Z}_{\geq}$ and $\ket{\varPsi_n}$, the state of the system at time $n$, the relationship between the states $\ket{\varPsi_n}$ and $\ket{\varPsi_{n+1}}$ is described as
\begin{eqnarray}
\ket{\varPsi_{n+1}} = W_{n+1}\ket{\varPsi_n}.\nonumber
\end{eqnarray}

\section{Schemes of Teleportation}
Let us set $\Hp\otimes \Hc{\rm A}$ and $\Hc{\rm B}$ as the Alice and Bob's spaces, respectively after the fashion of the proposed idea by \cite{WSX17}. Here, $\Hc{\rm A}$, $\Hc{\rm B}\cong \mathbb{C}^2$. In this section, we consider quantum teleportation described in Figure 1. Now, the sender Alice wants to send $\ket{\phi} \in \Hc{\rm A} (\cong  \mathbb{C}^2)$ with $\|\phi\| = 1$ to the receiver Bob. We call $\ket{\phi}$ the target state.  \par
The space of this quantum teleportation is denoted by $\mathcal{H} = \Hp \otimes \Hc{A}\otimes \Hc{B}$. We set the initial state as
\begin{eqnarray}
\ket{\varPsi_0} = \ket{0} \otimes \ket{\phi} \otimes \ket{\psi} \in \mathcal{H}.\nonumber
\end{eqnarray}
Here, $\ket{\psi}$ satisfies $\|\psi\| =1$.  In the framework of  quantum walk, the total state space of quantum teleportation is isomorphic to a two-coin quantum walk whose position Hilbert space is $\Hp$ and whose coin Hilbert space is $\Hc{A}\otimes \Hc{B}$. On the other hand, from the point of view of quantum teleportation, Alice has two initial states $\ket{0} \otimes \ket{\phi} \in \Hp \otimes \Hc{A}$ and Bob has an initial state $\ket{\psi} \in \Hc{B}$, and the goal of the teleportation is that Bob obtains the state $\ket{\phi}$ as the element of $\Hc{B}$.\par
\begin{figure}[b]
% Use the relevant command to insert your figure file.
% For example, with the graphicx package use
\centering \includegraphics[width=140mm]{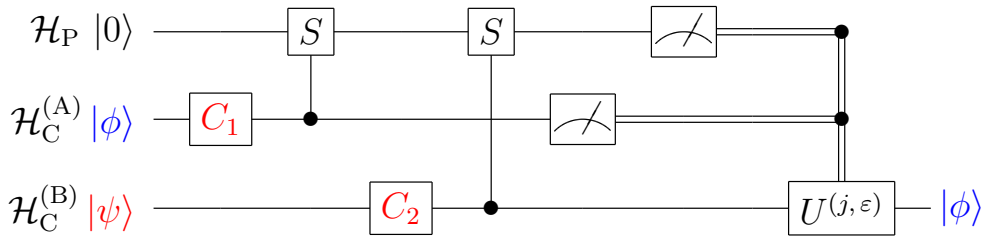}
% figure caption is below the figure
\caption{Circuit diagram of quantum teleportation by 2-coin quantum walks}
\label{fig:1}       % Give a unique label
\end{figure}
Then, we provide three stages: (1) time evolution, (2) measurement and (3) transformation. 

\subsection{Time Evolution by QW}
In the first stage, we take 2 steps of QWs with two coins; we describe the time evolution operator at the first and second steps $W_1$,\,$W_2$ as
\begin{eqnarray}
W_1 = \hat{S_1} \cdot \hat{C_1} = (S \otimes \ket{R}\bra{R} \otimes I_{2} + S^{-1} \otimes \ket{L}\bra{L} \otimes I_{2})(I_{\infty} \otimes C_1 \otimes I_{2}), \nonumber \\
W_2 = \hat{S_2} \cdot \hat{C_2} = (S \otimes I_{2} \otimes \ket{R}\bra{R} + S^{-1} \otimes I_{2} \otimes \ket{L}\bra{L})(I_{\infty} \otimes I_{2} \otimes C_2), \nonumber
\end{eqnarray}
respectively. Suppose $\ket{\varPsi_n}\in \mathcal{H}$ $(n=0,1,2)$ is the state after the $n$-th time evolution of the QW, and we regard the initial state of $\ket{\varPsi_0}$ of the quantum teleportation as the initial state of the QW. We run this QW for two steps, that is,
\begin{eqnarray}
\ket{\varPsi_0}\stackrel{W_1}{\longmapsto} \ket{\varPsi_1}\stackrel{W_2}{\longmapsto}\ket{\varPsi_2}.\nonumber
\end{eqnarray}
\subsection{Measurement}
In the second stage, to carry out the measurement on the Alice's state, we introduce the observables denoted by self-adjoint operators $M_1$ and $M_2$ on $\Hc{A}$ and $\Hp$, respectively, as follows:
\begin{eqnarray}
M_1 &=& (+1)\ket{\eta_R}\bra{\eta_R} + (-1) \ket{\eta_L}\bra{\eta_L},\nonumber \\
M_2 &=& \sum_{j \in \mathbb{Z}} \frac{\,j\,}{2} \ket{\xi_j}\bra{\xi_j},\nonumber
\end{eqnarray}
where $\ket{\eta_\varepsilon} = H_1\ket{\varepsilon}\,(\varepsilon \in \{R,\,L\})$, and $\ket{\xi_j} = H_2\ket{j}\,(j \in \mathbb{Z})$. Here, $H_1$ and $H_2$ are unitary operators on $\Hc{A}(\cong\mathbb{C}^2)$ and $\Hp(\cong\ell^2({\mathbb{Z}}))$, respectively. Especially, $H_2$ is described as follow:
\begin{eqnarray}\nonumber
&H_2 \simeq 
\left[ \begin{array}{ccc|ccc}
\alpha_{22} & \alpha_{20} & \alpha_{2{(-2)}} &&&\\
\alpha_{02} & \alpha_{00} & \alpha_{0{(-2)}} &\multicolumn{2}{c}{\raisebox{-3pt}[0pt][0pt]{\LARGE $\,\,\,O$}}&\\ 
\alpha_{{(-2)}2} & \alpha_{{(-2)}0} & \alpha_{{(-2)}{(-2)}} &&&\\ \hline
&&&&&\\
\multicolumn{2}{c}{\raisebox{3pt}[0pt][0pt]{\LARGE $\,\,\,\,\,\,O$}}&&\multicolumn{2}{c}{\raisebox{3pt}[0pt][0pt]{\LARGE $\,\,\,I$}}& 
\end{array} \right] = 
\left[ \begin{array}{ccc|ccc}
&&&&&\\
\multicolumn{2}{c}{\raisebox{3pt}[0pt][0pt]{\LARGE $\,\,\tilde{H}_2$}}&&\multicolumn{2}{c}{\raisebox{3pt}[0pt][0pt]{\LARGE $\,\,O$}}&\\ \hline
&&&&&\\
\multicolumn{2}{c}{\raisebox{3pt}[0pt][0pt]{\LARGE $\,\,O$}}&&\multicolumn{2}{c}{\raisebox{3pt}[0pt][0pt]{\LARGE $\,\,I$}}& 
\end{array} \right],&
\end{eqnarray}
where
\begin{eqnarray}\nonumber
\tilde{H}_2 = 
\left[ \begin{array}{ccc}
\alpha_{22} & \alpha_{20} & \alpha_{2{(-2)}} \\
\alpha_{02} & \alpha_{00} & \alpha_{0{(-2)}} \\
\alpha_{{(-2)}2} & \alpha_{{(-2)}0} & \alpha_{{(-2)}{(-2)}} 
\end{array} \right].
\end{eqnarray}
The computational basis of $H_2$ in RHS is $\{\ket{2},\,\ket{0},\,\ket{-2},\,\hdots\}$ by this order. 
The observed values of the observable $M_1$ are $\varepsilon\in \{\pm 1\}$ after the description of \cite{WSX17}, but in this paper, we describe the observed values of $M_1$ by $R, L$ by the bijection map 
\[ R\leftrightarrow +1 \text{ and } L\leftrightarrow -1.\] 
In the same way, we describe the observed values of $M_2$ as $\{-2,0,2\}$ by the bijection map 
\[ 2k \leftrightarrow k \;(k=-1,0,1). \] 
Furthermore,  we extend the domains of operators $M_1$ and $M_2$  to the whole system $\mathcal{H}$ by putting $M_1^{\rm (s)}$ and $M_2^{\rm (s)}$ as follows:
\begin{align*}
M_1^{\rm (s)} &:= I_{\infty} \otimes 
%(1\ket{\eta_R}\bra{\eta_R} + (-1) \ket{\eta_L}\bra{\eta_L} )}
M_1 \otimes I_{\Hc{B}}, \nonumber \\
M_2^{\rm (s)} &:= 
%\left( \sum_{j \in \mathbb{Z}} \frac{\,j\,}{2} \ket{\xi_j}\bra{\xi_j} \right)
M_2 \otimes I_{\Hc{A}} \otimes I_{\Hc{B}}. \nonumber
\end{align*}
This means that Alice carries out projection measurements on $\Hc{A}$ and $\Hp$ with the eigenvectors $\mathcal{B}_1 = \{ \ket{\eta_\varepsilon} | \varepsilon \in \{R,\,L\} \}$ of $M_1$ and $\mathcal{B}_2 = \{ \ket{\xi_j} | j \in \mathbb{Z} \}$ of $M_2$, respectively. 
If Alice gets the observed values $\varepsilon$ by $M_1$ and $j$ by $M_2$, respectively, then the states collapse to $\ket{\eta_\varepsilon}\in\mathcal{H}_C^{(A)}$ and $\ket{\xi_j}\in \mathcal{H}_P$, respectively.
%\red{In the following, we regard ...}\par

Through the measurements,  if the state of $\Hc{A}$ collapses to $\ket{\eta_\varepsilon} \in \mathcal{B}_1$ by $M_1$ and the state of $\Hp$ collapses to $\ket{\xi_j} \in \mathcal{B}_2$ by $M_2$, the degenerate state on the whole state is denoted by $\ket{\varPsi_*^{(j,\,\varepsilon)}}\in \mathcal{H}$. So, the state $\ket{\varPsi_*^{(j,\,\varepsilon)}}$ can be described explicitly as follows. The proof is given in Sect. 5.
\begin{proposition}\label{prop:finalstate}
{\rm The state $\ket{\varPsi_*^{(j,\,\varepsilon)}}$ can be described as}
\begin{eqnarray}
\ket{\varPsi_*^{(j,\,\varepsilon)}} = \ket{\xi_j} \otimes \ket{\eta_\varepsilon} \otimes \ket{\varPhi_*^{(j,\,\varepsilon)}}, \label{state}
\end{eqnarray}
{\rm where $\ket{\varPhi_*^{(j,\,\varepsilon)}} = V^{(j,\,\varepsilon)}\ket{\phi}$ and $V^{(j,\,\varepsilon)}$ is a linear map on $\Hc{B}$ (See (\ref{vje}) for the detailed expression for $V^{(j,\,\varepsilon)}$).}
\end{proposition}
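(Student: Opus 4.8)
The plan is to propagate $\ket{\varPsi_0}$ through the two walk steps, contract the result against the measurement eigenvectors, and read off the linear map on $\Hc{B}$; the product form (\ref{state}) then comes essentially for free.

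First I would record the structural observation that makes (\ref{state}) automatic. The measurements by $M_1^{\rm (s)}$ and $M_2^{\rm (s)}$ are projective, with rank-one eigenprojections $\ket{\eta_\varepsilon}\bra{\eta_\varepsilon}$ on $\Hc{A}$ and $\ket{\xi_j}\bra{\xi_j}$ on $\Hp$, both acting as the identity on $\Hc{B}$. Hence, up to a normalizing scalar, the collapsed state is
\begin{eqnarray}
\bigl(\ket{\xi_j}\bra{\xi_j}\otimes\ket{\eta_\varepsilon}\bra{\eta_\varepsilon}\otimes I_{\Hc{B}}\bigr)\ket{\varPsi_2}&=&\ket{\xi_j}\otimes\ket{\eta_\varepsilon}\otimes\ket{\varPhi_*^{(j,\varepsilon)}},\nonumber
\end{eqnarray}
where $\ket{\varPhi_*^{(j,\varepsilon)}}:=\bigl(\bra{\xi_j}\otimes\bra{\eta_\varepsilon}\otimes I_{\Hc{B}}\bigr)\ket{\varPsi_2}\in\Hc{B}$. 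So the real content is (i) an explicit formula for $\ket{\varPsi_2}$, and (ii) the fact that this contraction is a linear image of $\ket{\phi}$.

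Next I would compute $\ket{\varPsi_2}=W_2W_1\ket{\varPsi_0}$ step by step. Applying $\hat{C}_1=I_\infty\otimes C_1\otimes I_2$ turns $\ket{0}\otimes\ket{\phi}\otimes\ket{\psi}$ into $\ket{0}\otimes C_1\ket{\phi}\otimes\ket{\psi}$, and $\hat{S}_1$ then shifts the position by $\pm1$ according to the $\Hc{A}$-component, so writing $C_1\ket{\phi}=\braket{R|C_1|\phi}\ket{R}+\braket{L|C_1|\phi}\ket{L}$ gives $\ket{\varPsi_1}$ supported on positions $\pm1$. Applying $\hat{C}_2$ replaces $\ket{\psi}$ by $C_2\ket{\psi}$, and $\hat{S}_2$ shifts the position by $\pm1$ according to the $\Hc{B}$-component; with $C_2\ket{\psi}=\braket{R|C_2|\psi}\ket{R}+\braket{L|C_2|\psi}\ket{L}$ this yields $\ket{\varPsi_2}$ as a sum of four terms supported on positions $\{2,0,-2\}$, with coin states $\ket{R}\ket{R},\ket{R}\ket{L},\ket{L}\ket{R},\ket{L}\ket{L}$ and coefficients each a product of one of $\braket{R|C_1|\phi},\braket{L|C_1|\phi}$ with one of $\braket{R|C_2|\psi},\braket{L|C_2|\psi}$. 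In particular the $\Hp$-support is $\{2,0,-2\}$, which is precisely the block on which $\tilde{H}_2$ is nontrivial — this explains the design of $H_2$ and the fact that only $j\in\{2,0,-2\}$ can give nonzero outcomes.

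Finally I would contract. Using $\braket{\xi_j|x}=\overline{\braket{x|H_2|j}}$ (which equals $\overline{\alpha_{xj}}$ for $j\in\{2,0,-2\}$ by the definition of $\tilde{H}_2$) and $\braket{\eta_\varepsilon|\delta}=\overline{\braket{\delta|H_1|\varepsilon}}$ for $\delta\in\{R,L\}$, apply $\bra{\xi_j}\otimes\bra{\eta_\varepsilon}\otimes I_{\Hc{B}}$ to $\ket{\varPsi_2}$ and collect the $\ket{R}$- and $\ket{L}$-coefficients in $\Hc{B}$. Since $\ket{\varPsi_0}$ depends on $\ket{\phi}$ only through the two numbers $\braket{R|C_1|\phi},\braket{L|C_1|\phi}$ and since $W_1$, $W_2$, the two projections, and the contraction are all linear, the result is linear in $\ket{\phi}$; expressing it as a $2\times2$ matrix in the $\{\ket{R},\ket{L}\}$ basis — with entries built from the $\alpha_{\cdot j}$, the entries of $C_1$ and $H_1$, and the fixed scalars $\braket{R|C_2|\psi},\braket{L|C_2|\psi}$ — defines $V^{(j,\varepsilon)}$ and gives the explicit formula (\ref{vje}). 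I expect no conceptual obstacle here: the only thing that needs care is the index bookkeeping across the three tensor slots (notably that $\hat{S}_1$ conditions its shift on $\Hc{A}$ while $\hat{S}_2$ conditions it on $\Hc{B}$) together with the complex conjugations coming from $\braket{\xi_j|x}$ and $\braket{\eta_\varepsilon|\delta}$; it is also worth fixing a normalization convention (normalize the collapsed state, so $V^{(j,\varepsilon)}$ is determined up to a positive scalar, or leave it unnormalized so that $\|\ket{\varPhi_*^{(j,\varepsilon)}}\|^2$ equals the probability of the outcome $(j,\varepsilon)$).
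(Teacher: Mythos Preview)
Your proposal is correct and follows essentially the same route as the paper: compute $\ket{\varPsi_2}$ by two walk steps (the paper writes the intermediate amplitudes as $Q_1\ket{\phi},P_1\ket{\phi}$ rather than $\braket{R|C_1|\phi},\braket{L|C_1|\phi}$, which is the same thing), then project onto $\ket{\eta_\varepsilon}$ and $\ket{\xi_j}$ and read off the $\Hc{B}$-factor as a linear image of $\ket{\phi}$. The only cosmetic difference is that the paper performs the two measurements sequentially ($M_1$ then $M_2$) and tracks the normalizing constants $\kappa^{(\varepsilon)},\kappa^{(j,\varepsilon)}$ explicitly, whereas you apply the combined projection $\ket{\xi_j}\bra{\xi_j}\otimes\ket{\eta_\varepsilon}\bra{\eta_\varepsilon}\otimes I$ in one shot; the resulting $\tilde V^{(j,\varepsilon)}$ and its normalized version $V^{(j,\varepsilon)}$ are identical.
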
\par
Then, our problem is converted to finding a practical necessary and sufficient condition for the unitarity of  $V^{(j,\,\varepsilon)}$.
%%%%%%%%%%
\subsection{Transformation}
In the final stage, Bob should convert his state $\ket{\varPhi_*^{(j,\,\varepsilon)}}\in \Hc{B}$ to the state $\ket{\phi}$. After the measurements, Alice sends the outcomes $\varepsilon\in\{L,R\}$ and $j\in\{-2,0,2\}$ to Bob. Then, Bob acts a unitary operator $U^{(j,\,\varepsilon)}$ on $\Hc{B}$ to $\ket{\varPhi_*^{(j,\,\varepsilon)}}$, depending on a pair of  observed results $(j,\,\varepsilon)$. Finally, Bob obtains a state $\ket{\varPhi}:=U^{(j,\,\varepsilon)}\ket{\varPhi_*^{(j,\,\varepsilon)}} \in \Hc{B}$. If $\ket{\varPhi}=\ket{\phi}$, we can regard that the teleportation is ``accomplished" (we define this clearly below).
\subsection{A mathematical formulation of schemes of teleportation}
In the above subsections, we introduced the notion of quantum teleportation driven by quantum walk. As we have seen, the factors to determine the scheme of this teleportation are Bob's initial state $\ket{\psi}$, the coin operators $C_1$ and $C_2$, and the measurement operator $H_1$ and $H_2$. Then, for convenience, we define the set of them as the parameter of the teleportation as follows:
\begin{definition}
{\rm We call 
\begin{align}\nonumber
\bfit{T}= (\ket{\psi};\,C_1,\,C_2;\,H_1,\,H_2) \in {\rm \mathbb{C}^2 \times U(2) \times U(2) \times U(2) \times U(\infty)}
\end{align} a {\bf quantum walk measurement procedure}.}
\end{definition}
\begin{definition}
{\rm Let $\ket{\varPhi}\in \Hc{B}$ be a Bob's final state of a quantum walk measurement procedure ${\bfit T}$ and $\ket{\phi}\in \Hc{A}$ be the target state.
If this quantum walk measurement procedure ${\bfit T}$ satisfies $\ket{\varPhi} = \ket{\phi}$ for any observed value $(j,\,\varepsilon)\in \{ -2,0,2 \}\times \{L,R\}$ by Alice, we say that {\bf the quantum teleportation is accomplished by {\bfit T}}. }
\end{definition}
\begin{definition}
{\rm We define $\mathcal{T} \subset {\rm \mathbb{C}^2 \times U(2) \times U(2) \times U(2) \times U(\infty)}$ by
\begin{eqnarray}\nonumber
\mathcal{T} := \left\{ \bfit{T} = (\ket{\psi};\,C_1,\,C_2;\,H_1,\,H_2)\,|\,\scalebox{0.8}[1]{\rm {\bfit T}\, accomplishes\, the\, quantum\, teleportation.} \right\}
\end{eqnarray}
and call $\mathcal{T}$ {\bf the class of quantum teleportation driven by 2-coin quantum walks}.}
\end{definition}\par
The main purpose of this paper is to determine explicitly the class $\mathcal{T}$.
\section{Our result} 
In this section, we present our main result on the quantum teleportation by quantum walks.
\subsection{Main Theorem}
\begin{theorem}\label{thm:main}
{\rm Quantum walk measurement procedure $\bfit{T} = (\ket{\psi};\,C_1,\,C_2;\,H_1,\,H_2)$ accomplishes the quantum teleportation, i.e., $\bfit{T} \in \mathcal{T}$ iff $\bfit{T}$ satisfies the following three conditions simultaneously}:
\begin{enumerate}[{\bf (I)}]
\item {\bf [Condition for $H_1$]} {\rm $|\braket{R|H_1|R}|=|{\braket{R|H_1|L}}|$.}
%%%%%
\item {\bf [Condition for $C_2$ and $\psi$\,]} {\rm $\left|\left\langle R|C_{2}| \psi\right\rangle\right|=\left|\left\langle L|C_{2}| \psi\right\rangle\right| = \displaystyle\frac{1}{\sqrt{2}}$.}
%%%%%
\item {\bf [Condition for $H_2$]} {\rm $\bfit{T}$ satisfies one of the following {two} conditions at least:}
\begin{enumerate}
\renewcommand{\labelenumii}{\bf (\roman{enumii})}
%%%%%
\item 
{\rm Let $\bfit{H}$ be the set of three-dimensional unitary matrices defined by 
    \begin{equation}\nonumber \bfit{H}=\left\{  
    \begin{bmatrix} p & r & 0 \\  0 & 0 & t  \\q & s & 0 \end{bmatrix},\;
    \begin{bmatrix} p & 0 & r \\ 0 & t & 0 \\ q & 0 & s  \end{bmatrix},\;
    \begin{bmatrix} 0 & p & r \\ t & 0 & 0 \\ 0 & q & s \end{bmatrix}
    \in {\rm U}(3) \;:\; 
    |p|=|q|   \right\} 
    \end{equation}
Then, $H_2=\tilde{H}_2\oplus I_\infty$ with $\tilde{H}_2\in \bfit{H}$. }
\item {\rm for all $k\in \{0,\,\pm 2\}$,}
\begin{align}\nonumber
|(H_2)_{2k}| = |(H_2)_{{(-2)}k}|
\end{align}
{\rm and}
\begin{align}\nonumber
\agmnt(H_2)_{2k} + \agmnt(H_2)_{{(-2)}k} - 2\agmnt(H_2)_{0k} \in (2\mathbb{Z}+1)\pi.
\end{align}
\end{enumerate}
%\red{!!(ii)$\Rightarrow$ (iii) !!}\\
{\rm Here, $(H_2)_{jk} =\braket{j|H_2|k}$.}
\end{enumerate}
{\rm Moreover, in any case, the transformation $U^{(j,\,\varepsilon)}$ by Bob depending on observed results $(j,\,\varepsilon)$ is unitary described as}
\begin{align}\nonumber
U^{(j,\,\varepsilon)} = \frac{1}{\|V^{(j,\,\varepsilon)}\ket{\phi}\|} \left(V^{(j,\,\varepsilon)}\right)^{-1},
\end{align}
{\rm where}
%\begin{align}\nonumber
%V^{(j,\,\varepsilon)}=\twovec {\bra{\eta_\varepsilon}(\overline{\alpha_{2k}}\ket{R}\bra{R}+\overline{\alpha_{0k}}\ket{L}\bra{L})\beta_{R}}{\bra{\eta_\varepsilon}(\overline{\alpha_{0k}}\ket{R}\bra{R} + \overline{\alpha_{(-2)k}}\ket{L}\bra{L})\beta_{L}}C_1,
%\end{align}
\begin{align}\nonumber
V^{(j,\,\varepsilon)}=\twovec {\bra{\eta_\varepsilon}(\overline{\alpha_{2j}}Q_1+\overline{\alpha_{0j}}P_1)\beta_{R}}{\bra{\eta_\varepsilon}(\overline{\alpha_{0j}}Q_1 + \overline{\alpha_{(-2)j}}P_1)\beta_{L}},
\end{align}
{\rm regardless of $\ket{\phi}$. Here $\alpha_{jk}=(H_2)_{jk}$ and $\beta_L=\bra{L}C_2\ket{\psi}$, $\beta_R=\bra{R}C_2\ket{\psi}$.}
\end{theorem}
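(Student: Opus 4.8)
Our plan is to turn the statement, via Proposition~\ref{prop:finalstate}, into a coordinate condition on the $2\times2$ maps $V^{(j,\,\varepsilon)}$ and then to eliminate the parameters $\ket{\psi},C_{1},C_{2},H_{1},H_{2}$ one family at a time. The one conceptual step is the following reduction. After Alice's outcome $(j,\varepsilon)$, Bob holds $V^{(j,\,\varepsilon)}\ket{\phi}$, renormalised, and must map it to $\ket{\phi}$ by a single $U^{(j,\,\varepsilon)}\in\mathrm U(2)$ that does not depend on the unknown $\ket{\phi}$; this forces $U^{(j,\,\varepsilon)}V^{(j,\,\varepsilon)}$ to act as a constant multiple of the identity on the unit sphere and $\|V^{(j,\,\varepsilon)}\ket{\phi}\|$ to be independent of $\ket{\phi}$. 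Hence $\bfit T\in\mathcal T$ if and only if, for each of the six outcomes, $V^{(j,\,\varepsilon)}$ is a scalar multiple of a unitary, i.e.\ $V^{(j,\,\varepsilon)}\bigl(V^{(j,\,\varepsilon)}\bigr)^{\dagger}=c_{j,\varepsilon}I_{2}$ with $c_{j,\varepsilon}\ge 0$ (an outcome with $V^{(j,\,\varepsilon)}=0$ has probability $0$ and is vacuous). Everything after this is linear algebra.

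Next we make the parameters explicit. Using the formula for $V^{(j,\,\varepsilon)}$ in the theorem together with $Q_{1}=\ket R\bra R C_{1}$ and $P_{1}=\ket L\bra L C_{1}$, we obtain the factorisation $V^{(j,\,\varepsilon)}=D_{\beta}\,N^{(j)}\,E_{\varepsilon}\,C_{1}$, where $D_{\beta}=\mathrm{diag}(\beta_{R},\beta_{L})$, $E_{\varepsilon}=\mathrm{diag}\bigl(\overline{\braket{R|\eta_\varepsilon}},\,\overline{\braket{L|\eta_\varepsilon}}\bigr)$, and
\[
N^{(j)}=\begin{bmatrix}\overline{\alpha_{2j}}&\overline{\alpha_{0j}}\\ \overline{\alpha_{0j}}&\overline{\alpha_{(-2)j}}\end{bmatrix}
\]
is the symmetric matrix built from the $j$-th column of $\tilde H_{2}$. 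Since $C_{1}$ is unitary it cancels in $V^{(j,\,\varepsilon)}(V^{(j,\,\varepsilon)})^{\dagger}$ (so $C_{1}$ plays no role in the theorem), and the condition becomes, for every $(j,\varepsilon)$,
\[
D_{\beta}\,N^{(j)}\,E_{\varepsilon}E_{\varepsilon}^{\dagger}\,(N^{(j)})^{\dagger}\,D_{\beta}^{\dagger}\ \propto\ I_{2},
\]
i.e.\ the twelve scalar equations ``the off-diagonal entry vanishes'' and ``the two diagonal entries coincide'', in the data $\braket{R|\eta_\varepsilon},\braket{L|\eta_\varepsilon}$ (coming from $H_{1}$), $\beta_{R},\beta_{L}$ (from $C_{2}$ and $\ket\psi$), and $\alpha_{jk}=(H_{2})_{jk}$.

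Now we read off the three conditions. Summing the ``equal-diagonal'' equations over $j\in\{2,0,-2\}$ and using that the rows of $\tilde H_{2}$ and the vectors $H_{1}\ket\varepsilon$ are unit vectors collapses everything to $|\beta_{R}|^{2}=|\beta_{L}|^{2}$, which with $|\beta_{R}|^{2}+|\beta_{L}|^{2}=\|C_{2}\ket\psi\|^{2}=\|\psi\|^{2}=1$ is condition~(II). With (II) in hand, $\beta_{R},\beta_{L}\ne0$, and the $\varepsilon=R$ and $\varepsilon=L$ off-diagonal equations differ only by interchanging $|\braket{R|\eta_R}|^{2}$ and $|\braket{L|\eta_R}|^{2}$ (because $H_{1}$ is unitary); adding and subtracting them yields, for each $j$, both $\overline{\alpha_{2j}}\alpha_{0j}+\overline{\alpha_{0j}}\alpha_{(-2)j}=0$ and $\bigl(|\braket{R|\eta_R}|^{2}-|\braket{L|\eta_R}|^{2}\bigr)\bigl(\overline{\alpha_{2j}}\alpha_{0j}-\overline{\alpha_{0j}}\alpha_{(-2)j}\bigr)=0$. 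If $|\braket{R|\eta_R}|\ne|\braket{L|\eta_R}|$, these two relations force $\overline{\alpha_{2j}}\alpha_{0j}=\overline{\alpha_{0j}}\alpha_{(-2)j}=0$ for all $j$; since the middle row of $\tilde H_{2}$ is nonzero, some column equals $(0,*,0)\T$, and substituting that column into its own ``equal-diagonal'' equation (already simplified by (II)) gives $|\braket{R|\eta_R}|=|\braket{L|\eta_R}|$, a contradiction. Thus $|\braket{R|\eta_R}|=|\braket{L|\eta_R}|$, which by the row/column relations of $H_{1}$ is condition~(I). Finally, with (I) and (II) one has $E_{R}E_{R}^{\dagger}=E_{L}E_{L}^{\dagger}=\tfrac12 I_{2}$ and $D_{\beta}D_{\beta}^{\dagger}=\tfrac12 I_{2}$, so all twelve equations reduce to: for each $j$, $|\alpha_{2j}|=|\alpha_{(-2)j}|$ and $\overline{\alpha_{2j}}\alpha_{0j}+\overline{\alpha_{0j}}\alpha_{(-2)j}=0$. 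Splitting the last relation according to whether $\alpha_{0j}=0$ (automatic if it is; equivalent, after comparing moduli and arguments, to $\arg\alpha_{2j}+\arg\alpha_{(-2)j}-2\arg\alpha_{0j}\in(2\mathbb Z+1)\pi$ if it is not) and then sorting the columns of $\tilde H_{2}$ by how many of $\alpha_{02},\alpha_{00},\alpha_{0(-2)}$ vanish — two or three forced zeros pinning $\tilde H_{2}$, through its unitarity, to the sparse shape of $\bfit{H}$, the remaining possibilities controlled by the argument condition — produces exactly the dichotomy (i)/(ii) of condition~(III). For the converse one runs the computation backwards; along the way $\det N^{(j)}\ne0$, so every $V^{(j,\,\varepsilon)}$ is invertible, the probability-$0$ case never arises, and $U^{(j,\,\varepsilon)}=\|V^{(j,\,\varepsilon)}\ket\phi\|^{-1}(V^{(j,\,\varepsilon)})^{-1}$ is the announced unitary.

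The main obstacle is this last block. One must juggle the orthonormality relations of $H_{1}$, of $\tilde H_{2}$, and of $C_{2}\ket\psi$ simultaneously, and — more delicately — the case analysis for condition~(III) has to handle configurations in which some $\alpha_{jk}$ vanish, where the literal ``argument'' statement is undefined and must be read through the equivalent bilinear identity $\overline{\alpha_{2j}}\alpha_{0j}+\overline{\alpha_{0j}}\alpha_{(-2)j}=0$, while checking that alternatives (i) and (ii) together exhaust all solutions. By contrast the reduction step and the eliminations giving (I) and (II) are comparatively routine.
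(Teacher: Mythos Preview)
Your proposal is correct and follows a genuinely different route from the paper's.

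The paper proves the theorem by writing $\tilde V^{(j,\varepsilon)}$ with rows $\bra{v_R^{(j,\varepsilon)}},\bra{v_L^{(j,\varepsilon)}}$, so that unitarity up to scalar becomes the pair of conditions $\|v_R\|=\|v_L\|$ (Condition~I) and $\braket{v_R|v_L}=0$ (Condition~II). Each of these is rewritten as $\bra{\eta_\varepsilon}\mathrm{diag}(A,B)\ket{\eta_\varepsilon}=0$ and then split into two subcases: either the diagonal matrix vanishes ($X_i$) or $\ket{\eta_\varepsilon}$ is mapped to $\ket{\eta_{\lnot\varepsilon}}$ ($Y_i$). The four subcases $X_1,Y_1,X_2,Y_2$ are analysed one at a time (Lemmas~13--16), and the theorem is assembled from the four products $X_i\wedge X_j,\dots$ (Lemmas~17--20).

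Your argument replaces this case tree by two global eliminations made possible by the factorisation $V^{(j,\varepsilon)}=D_\beta N^{(j)}E_\varepsilon C_1$, which the paper never writes down. First, summing the diagonal identities over $j$ and invoking the row-normalisation of $\tilde H_2$ yields $|\beta_R|=|\beta_L|$ in one line --- the paper obtains this only after its full $X_1/Y_1$ analysis. Second, adding and subtracting the off-diagonal identities for $\varepsilon=R$ and $\varepsilon=L$ and using the column relations of $H_1$ gives the pair $\overline{\alpha_{2j}}\alpha_{0j}\pm\overline{\alpha_{0j}}\alpha_{(-2)j}$, from which your contradiction argument (feeding a $(0,*,0)^{\mathrm T}$ column back into the diagonal equation) extracts $|a|=|b|$; the paper gets $|a|=|b|$ separately inside the $Y_1$ and $Y_2$ lemmas. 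Once (I) and (II) are in hand the two approaches converge: both reduce to the per-column constraint $|\alpha_{2j}|=|\alpha_{(-2)j}|$ together with the bilinear relation $\overline{\alpha_{2j}}\alpha_{0j}+\overline{\alpha_{0j}}\alpha_{(-2)j}=0$, and both must sort the dichotomy (III)(i)/(ii) according to how many of the $\alpha_{0j}$ vanish.

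What your approach buys is transparency and brevity: the factorisation makes the irrelevance of $C_1$ immediate and turns the twelve scalar equations into a visibly separable system in $(H_1)$, $(C_2,\psi)$, $(H_2)$. What the paper's approach buys is a more explicit audit trail through the individual subcases $X_1\wedge Y_2$, $Y_1\wedge X_2$, $Y_1\wedge Y_2$, which lines up directly with the two alternatives in (III). Your observation that the ``argument'' clause of (III)(ii) should really be read as the bilinear identity (to cover columns with a single vanishing $\alpha_{0j}$) is well taken; the paper's proof has the same bilinear identity at its core (its condition $Y_2$) and is subject to the same interpretive convention.
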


\begin{remark}
{\rm This theorem implies that accomplishment of the quantum teleportation is independent of $C_1$. Moreover, the theorem does not depend on $C_2$ and $\ket{\psi}$, for each one, but 
``$C_2\ket{\psi}$." After all,  the accomplishment of quantum teleportation is determined only by three factors, that is, $H_1$, $H_2$, and $\ket{\psi'}=C_2\ket{\psi}$; this is a generalization of the statement of \cite{WSX17}. }
\end{remark}

\begin{remark}
{\rm The condition {\bf (II)} means that the coin operator $C_2$ must be unbiased. This claim agrees with Li et al. \cite{LCWHL19}, in which it is the case of the number of qubit $N=1$.}
\end{remark}

\subsection{Examples and Demonstrations}
In the following, we put $H=\displaystyle\frac{1}{\sqrt{2}} \twobytwo{1}{1}{1}{-1}$.
\begin{enumerate}[(1)]
\item\quad We choose
\begin {eqnarray}
\ket{\psi} = \ket{R},\,\,C_1 = I_2,\,\,C_2 = H_1=H,\,\,\tilde{H}_2 \simeq  H\oplus {I_{\infty}}. \nonumber
\end{eqnarray}
This case satisfies {\bf (III)-(i)} and Wang et al.\cite{WSX17} has shown that in this case the quantum teleportation is accomplished. Bob's state before measurement $\ket{\varPhi^{(j,\,\varepsilon)}}$ and the operator $U^{(j,\,\varepsilon)}$ are as follows:
\begin{align*}
\begin{array}{|c|c|c|}
\hline
(j,\,\varepsilon) &\quad\quad\ket{\varPhi^{(j,\,\varepsilon)}}\quad\quad &\quad\quad U^{(j,\,\varepsilon)}\quad\quad \rrule\\ \hline \hline
(2,\,R) & \ket{\phi} & I_2 \rrule\\ \hline
(0,\,R) & X\ket{\phi} & X \rrule\\ \hline
(-2,\,R) & Z\ket{\phi} & Z \rrule\\ \hline
(2,\,L) & Z\ket{\phi} &  Z \rrule\\ \hline
(0,\,L) & XZ\ket{\phi} & ZX \rrule\\ \hline
(-2,\,L) & \ket{\phi} & I_2 \rrule\\ \hline
\end{array}
\end{align*}

\item We choose
\begin {eqnarray}
\ket{\psi} = \frac{\ket{R}+\ket{L}}{\sqrt{2}},\,\,C_1 = C_2 = I_2,\,H_1=H,\,\,
\tilde{H}_2 =  \displaystyle\frac{1}{\sqrt{3}}\threebythree{-e^{ \frac{4}{3}\pi i }}{-1}{-e^{ \frac{2}{3}\pi i }}{1}{1}{1}{e^{ \frac{2}{3}\pi i }}{1}{e^{ \frac{4}{3}\pi i }}. \nonumber
\end{eqnarray}
This case satisfies {\bf (III)-(ii)}. Bob's state before measurement $\ket{\varPhi^{(j,\,\varepsilon)}}$ and the operator $U^{(j,\,\varepsilon)}$ are as follows:
\begin{align*}
\begin{array}{|c|c|c|}
\hline
(j,\,\varepsilon) & \ket{\varPhi^{(j,\,\varepsilon)}} & U^{(j,\,\varepsilon)} \rrule \\ \hline \hline
(2,\,R) &\quad \displaystyle\frac{1}{\sqrt{2}}\twobytwo{e^{\frac{2}{3}\pi i}}{1}{1}{-e^{\frac{4}{3}\pi i}}\ket{\phi} \quad&\quad \displaystyle\frac{1}{\sqrt{2}}\twobytwo{e^{\frac{4}{3}\pi i}}{1}{1}{-e^{\frac{2}{3}\pi i}} \quad \rrrule \\ \hline
(0,\,R) & \displaystyle\frac{1}{\sqrt{2}}\twobytwo{1}{1}{1}{-1}\ket{\phi} & \displaystyle\frac{1}{\sqrt{2}}\twobytwo{1}{1}{1}{-1}  \rrrule \\ \hline
(-2,\,R) & \displaystyle\frac{1}{\sqrt{2}}\twobytwo{e^{\frac{4}{3}\pi i}}{1}{1}{-e^{\frac{2}{3}\pi i}}\ket{\phi} & \displaystyle\frac{1}{\sqrt{2}}\twobytwo{e^{\frac{2}{3}\pi i}}{1}{1}{-e^{\frac{4}{3}\pi i}}  \rrrule \\ \hline
(2,\,L) & \displaystyle\frac{1}{\sqrt{2}}\twobytwo{e^{\frac{2}{3}\pi i}}{-1}{1}{e^{\frac{4}{3}\pi i}}\ket{\phi} & \displaystyle\frac{1}{\sqrt{2}}\twobytwo{e^{\frac{4}{3}\pi i}}{1}{-1}{e^{\frac{2}{3}\pi i}}  \rrrule \\ \hline
(0,\,L) & \displaystyle\frac{1}{\sqrt{2}}\twobytwo{1}{-1}{1}{1}\ket{\phi} & \displaystyle\frac{1}{\sqrt{2}}\twobytwo{1}{1}{-1}{1}  \rrrule \\ \hline
(-2,\,L) & \displaystyle\frac{1}{\sqrt{2}}\twobytwo{e^{\frac{4}{3}\pi i}}{-1}{1}{e^{\frac{2}{3}\pi i}}\ket{\phi} & \displaystyle\frac{1}{\sqrt{2}}\twobytwo{e^{\frac{2}{3}\pi i}}{1}{-1}{e^{\frac{4}{3}\pi i}}  \rrrule \\ \hline
\end{array}
\end{align*}

\item\quad We choose
\begin {eqnarray}
\ket{\psi} = \frac{\ket{R}+i\ket{L}}{\sqrt{2}},\,\,C_1 = C_2 =I_2,\,\,H_1=H,\,\,
\tilde{H}_2 = \threebythree{i/2}{1/\sqrt{2}}{-i/2}{1/\sqrt{2}}{0}{1/\sqrt{2}}{i/2}{-1/\sqrt{2}}{-i/2}. \nonumber
\end{eqnarray}
This case is another example of {\bf (III)-(ii)}. Bob's state before measurement $\ket{\varPhi^{(j,\,\varepsilon)}}$ and the operator $U^{(j,\,\varepsilon)}$ are as follows:
\begin{align*}
\begin{array}{|c|c|c|}
\hline
(j,\,\varepsilon) & \ket{\varPhi^{(j,\,\varepsilon)}} & U^{(j,\,\varepsilon)} \rrule\\ \hline \hline
(2,\,R) & \quad\displaystyle\frac{1}{\sqrt{3}}\twobytwo{i}{\sqrt{2}}{\sqrt{2}i}{-1}\ket{\phi} \quad &\quad \displaystyle\frac{1}{\sqrt{3}}\twobytwo{-i}{-\sqrt{2}i}{\sqrt{2}}{-1}  \quad\rrrule \\ \hline
(0,\,R) & \twobytwo{-1}{0}{0}{i}\ket{\phi} & \twobytwo{-1}{0}{0}{-i}  \rrrule \\ \hline
(-2,\,R) & \displaystyle\frac{1}{\sqrt{3}}\twobytwo{-i}{\sqrt{2}}{\sqrt{2}i}{1}\ket{\phi} & \displaystyle\frac{1}{\sqrt{3}}\twobytwo{i}{-\sqrt{2}i}{\sqrt{2}}{1}  \rrrule \\ \hline
(2,\,L) & \displaystyle\frac{1}{\sqrt{3}}\twobytwo{i}{-\sqrt{2}}{\sqrt{2}i}{1}\ket{\phi} & \displaystyle\frac{1}{\sqrt{3}}\twobytwo{-i}{-\sqrt{2}i}{-\sqrt{2}}{1}  \rrrule \\ \hline
(0,\,L) & \twobytwo{-1}{0}{0}{-i}\ket{\phi} & \twobytwo{-1}{0}{0}{i}  \rrrule \\ \hline
(-2,\,L) & \displaystyle\frac{1}{\sqrt{3}}\twobytwo{-i}{-\sqrt{2}}{\sqrt{2}i}{-1}\ket{\phi} & \displaystyle\frac{1}{\sqrt{3}}\twobytwo{i}{-\sqrt{2}i}{-\sqrt{2}}{-1}  \rrrule \\ \hline
\end{array}
\end{align*}
\end{enumerate}

\section{Proof of Main Theorem}
\subsection{Proof of Proposition~1}
\begin{proof}
At $n=1$, $\ket{\varPsi_0}$ evolves to
\begin{eqnarray}
\ket{\varPsi_1} = W_1\ket{\varPsi_0} = \ket{1} \otimes \ket{Q_1 \phi} \otimes \ket{\psi} + \ket{-1} \otimes \ket{P_1 \phi} \otimes \ket{\psi}, \nonumber
\end{eqnarray}
and at $n=2$, $\ket{\varPsi_1}$ evolves to
\begin{align}
\ket{\varPsi_2} = W_2\ket{\varPsi_1} =& \ket{2} \otimes \ket{Q_1 \phi} \otimes \ket{Q_2 \psi}  \nonumber \\
& + \ket{0} \otimes \left(\ket{Q_1 \phi} \otimes \ket{P_2 \psi} + \ket{P_1 \phi} \otimes \ket{Q_2 \psi}\right) \nonumber \\
& + \ket{-2} \otimes \ket{P_1 \phi} \otimes \ket{P_2 \psi}. \nonumber
\end{align}
If the coin state of Alice collapses to $\ket{\eta_\varepsilon} \in \mathcal{B}_1$ after the observable $M_1$, the total state $\ket{\varPsi_2}$ is changed to
\begin{align}
\ket{\varPsi_*^{(\varepsilon)}} =\frac{1}{\kappa^{(\varepsilon)}}\{& \ket{2} \otimes \ket{\eta_\varepsilon} \otimes \braket{\eta_\varepsilon|Q_1 \phi}\ket{Q_2 \psi} \nonumber \\
&+ \ket{0} \otimes \ket{\eta_\varepsilon} \otimes \left( \braket{\eta_\varepsilon|Q_1 \phi}\ket{P_2 \psi}+ \braket{\eta_\varepsilon|P_1 \phi}\ket{Q_2 \psi}\right) \nonumber \\
&+ \ket{-2} \otimes \ket{\eta_\varepsilon} \otimes \braket{\eta_\varepsilon|P_1 \phi}\ket{P_2 \psi}\}.\nonumber 
\end{align}
Here, $\kappa^{(\varepsilon)}$ is a normalizing constant. Moreover, if the position state of Alice collapses to $\ket{\xi_j} \in \mathcal{B}_2$ after the observable $M_2$, the total state $\ket{\varPsi_*^{(\varepsilon)}}$ is changed to the normalized state of
\begin{align}
\ket{\varPsi_*^{(j,\,\varepsilon)}}
&= \displaystyle\frac{1}{\kappa^{(j,\,\varepsilon)}}[\ket{\xi_j} \otimes \ket{\eta_\varepsilon} \nonumber \otimes \{\braket{\eta_\varepsilon|(\braket{\xi_j|2}Q_1+\braket{\xi_j|0}P_1|{\phi}}\braket{R|{C_2}|{\psi}})\ket{R}  \nonumber\\
&\hspace{40mm}+ \braket{\eta_\varepsilon|\braket{\xi_j|0}Q_1 + \braket{\xi_j|{-2}}P_1|\phi}\braket{L|{C_2}|{\psi}}\ket{L}\}] \nonumber \\
&= \ket{\xi_j} \otimes \ket{\eta_\varepsilon} \otimes \displaystyle\frac{\tilde{V}^{(j,\,\varepsilon)}}{\kappa^{(j,\,\varepsilon)}}\ket{\phi},\nonumber
\end{align}
where
\begin{align}\label{vje}
\tilde{V}^{(j,\,\varepsilon)} :=  \onebytwo {\bra{\eta_\varepsilon}(\braket{\xi_j|2}Q_1+\braket{\xi_j|0}P_1)\braket{R|{C_2}|{\psi}}}{\bra{\eta_\varepsilon}(\braket{\xi_j|0}Q_1 + \braket{\xi_j|{-2}}P_1)\braket{L|{C_2}|{\psi}}},
\end{align}
and $\kappa^{(j,\,\varepsilon)}$ is a normalizing constant.  Note that the amplitudes are inserted into the third slots in the above expression. Now, because $\|\ket{\xi_j} \otimes \ket{\eta_\varepsilon}\| = 1$,
\begin{eqnarray}
\kappa^{(j,\,\varepsilon)} = \| \ket{\xi_j}\otimes\ket{\eta_\varepsilon}\otimes\tilde{V}^{(j,\,\varepsilon)}\ket{\phi} \| = \|\tilde{V}^{(j,\,\varepsilon)}\ket{\phi} \|.\nonumber
\end{eqnarray}
Here, putting
\begin{align}
V^{(j,\,\varepsilon)} = \frac{\tilde{V}^{(j,\,\varepsilon)}}{\kappa^{(j,\,\varepsilon)}}\text{\quad and\quad} \ket{\varPhi^{(j,\,\varepsilon)}} = \frac{\tilde{V}^{(j,\,\varepsilon)}}{\kappa^{(j,\,\varepsilon)}}\ket{\phi} = V^{(j,\,\varepsilon)}\ket{\phi},\nonumber
\end{align}
we obtain the desired conclusion. 
\end{proof}
Let us put $\alpha_{jk} = \braket{j|H_1|k}\,(j,\,k\in \{0,\,\pm 2\})$ and $\beta_{\varepsilon}=\braket{\varepsilon |C_2|\psi}\,(\varepsilon \in \{L,\,R\})$. Then $\tilde{V}^{(j,\,\varepsilon)}$ is re-expressed by the following:
\begin{align}
\tilde{V}^{(j,\,\varepsilon)} =\onebytwo {\bra{\vr^{(j,\,\varepsilon)}}}{\bra{\vl^{(j,\,\varepsilon)}}} = \displaystyle\onebytwo {\bra{\eta_\varepsilon}
\twobytwo{\overline{\alpha_{2j}}\beta_R}{0}{0}{\overline{\alpha_{0j}}\beta_R}
}{\bra{\eta_\varepsilon}
\twobytwo{\overline{\alpha_{0j}}\beta_L}{0}{0}{\overline{\alpha_{(-2)j}}\beta_L} 
}C_1,  \label{v}
\end{align}
where
\begin{align}
\bra{\vr^{(j,\,\varepsilon)}}=&\bra{\eta_\varepsilon}(\braket{\xi_j|2}Q_1+\braket{\xi_j|0}P_1)\braket{R|{C_2}|{\psi}}, \label{eq:vR}\\
 {\bra{\vl^{(j,\,\varepsilon)}}} =&  {\bra{\eta_\varepsilon}(\braket{\xi_j|0}Q_1 + \braket{\xi_j|{-2}}P_1)\braket{L|{C_2}|{\psi}}},\label{eq:vL}
 \end{align}
$P_1=\ket{L}\bra{L}C_1${\rm , and }$Q_1=\ket{R}\bra{R}C_1$. We will use this expression later.

\subsection{Rewrite of the accomplishment of teleportation}
The following lemma seems to be simple, but plays an important role later.
\begin{lemma}\label{l1}
{\rm The following two statements are equivalent for $V\in M_n(\mathbb{C})$}:
\begin{enumerate}[{\rm (i)}]
\item {\rm There exists $U\in \mathrm{U}(n)$ such that for any $\phi\in \mathbb{C}^n\backslash\{0\}$, there exists a complex value $\kappa=\kappa(\phi)$ such that}
\begin{align}\nonumber
UV\phi =\kappa(\phi)\phi.
\end{align}
\item {\rm There exists a  complex number $\kappa$ such that }
\begin{align}\nonumber
V\in \kappa\mathrm{U}(n).
\end{align}
\end{enumerate}
\end{lemma}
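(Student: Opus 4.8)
The plan is to establish the two implications separately; the substantive direction is (i)$\Rightarrow$(ii), while (ii)$\Rightarrow$(i) is essentially by inspection.

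For (ii)$\Rightarrow$(i): assume $V=\kappa W$ with $W\in\mathrm{U}(n)$ and $\kappa\in\mathbb{C}$. Then take $U:=W^{-1}\in\mathrm{U}(n)$; for every $\phi\in\mathbb{C}^n\setminus\{0\}$ one gets $UV\phi=W^{-1}(\kappa W\phi)=\kappa\phi$, so (i) holds with the constant function $\kappa(\phi)\equiv\kappa$. (If $\kappa=0$ then $V=0$ and any $U\in\mathrm{U}(n)$ works, with $\kappa(\phi)\equiv 0$.)

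For (i)$\Rightarrow$(ii): set $A:=UV\in M_n(\mathbb{C})$. The hypothesis says precisely that every nonzero vector $\phi\in\mathbb{C}^n$ is an eigenvector of $A$, with eigenvalue $\kappa(\phi)$. I would then use the elementary fact that such an $A$ is forced to be a scalar matrix: given two linearly independent vectors $\phi_1,\phi_2$, comparing $A(\phi_1+\phi_2)=\kappa(\phi_1+\phi_2)(\phi_1+\phi_2)$ with $A\phi_1+A\phi_2=\kappa(\phi_1)\phi_1+\kappa(\phi_2)\phi_2$ and using linear independence yields $\kappa(\phi_1)=\kappa(\phi_1+\phi_2)=\kappa(\phi_2)$; since any two nonzero vectors can be connected through such comparisons, $\kappa(\cdot)$ is a single constant $\lambda$, hence $A=\lambda I_n$ (the case $n=1$ being trivial). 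Therefore $UV=\lambda I_n$, i.e. $V=\lambda U^{-1}$, and since $U^{-1}\in\mathrm{U}(n)$ this gives $V\in\lambda\,\mathrm{U}(n)$, which is (ii) with $\kappa=\lambda$.

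There is no real obstacle in this proof; the single point meriting a line of care is that in (i) the scalar $\kappa(\phi)$ is a priori an arbitrary function of $\phi$, and one must rule out genuine $\phi$-dependence — which is exactly what the ``every vector is an eigenvector $\Rightarrow$ scalar matrix'' argument accomplishes. It is also worth remarking that the degenerate case of singular $V$ is handled automatically: then $\lambda=0$, so $V=0\in 0\cdot\mathrm{U}(n)$, consistently with (ii).
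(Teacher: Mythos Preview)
Your proof is correct and follows essentially the same approach as the paper: both directions argue that (i) forces every nonzero vector to be an eigenvector of $UV$, hence $UV$ is a scalar matrix, hence $V\in\kappa\,\mathrm{U}(n)$. The only difference is that the paper asserts the step ``every vector is an eigenvector $\Rightarrow$ scalar matrix'' without justification, whereas you supply the standard two-vector argument and explicitly handle the degenerate case $\kappa=0$; your write-up is therefore more complete, but the route is the same.
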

\begin{proof}
\quad Assume (i) holds. For any $\phi\in \mathbb{C}^n$, $UV\phi = \kappa(\phi)\phi \,\,\Longleftrightarrow\,\, (UV-\kappa(\phi)I)\phi = 0 \,\,\Longleftrightarrow\,\,$ eigenvector of $UV$ is every $\phi \in \mathbb{C}^n\setminus\{0\}$. That is equivalent to $UV = \kappa(\phi)I$. Since $U$ and $V$ are independent of $\phi$, the eigenvalue $\kappa(\phi)$ must be independent of $\phi$. So (ii) holds. The converse is obvious.
\end{proof}
By using Lemma~\ref{l1}, the following lemma is completed:
%%%
\begin{lemma}\label{thm:unitarity}
\begin{multline*}
{\bfit T} \in \mathcal{T}\,\, \Longleftrightarrow\,\, 
\text{\rm for any } (j,\,\varepsilon)\in \{-2,\,0,\,2\}\times \{R,\,L\} \\ \text{\rm there exists } \kappa=\kappa^{(j,\,\varepsilon)} \text{\rm \:such that } \displaystyle\tilde{V}^{(j,\,\varepsilon)} \in \kappa {\rm U}(2).
\end{multline*}
\end{lemma}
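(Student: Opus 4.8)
The plan is to derive the equivalence purely by unwinding the definition of $\bfit{T}\in\mathcal T$ and then feeding the result into Lemma~\ref{l1}. First I would record what $\bfit{T}\in\mathcal T$ literally asserts. By Proposition~\ref{prop:finalstate} and its proof, once a target $\ket{\phi}$ with $\|\phi\|=1$ is fixed, Bob's state just before his transformation is
\[
\ket{\varPhi^{(j,\,\varepsilon)}}=V^{(j,\,\varepsilon)}\ket{\phi},\qquad V^{(j,\,\varepsilon)}=\frac{1}{\kappa^{(j,\,\varepsilon)}(\phi)}\,\tilde V^{(j,\,\varepsilon)},\qquad \kappa^{(j,\,\varepsilon)}(\phi):=\big\|\tilde V^{(j,\,\varepsilon)}\ket{\phi}\big\|,
\]
where $\tilde V^{(j,\,\varepsilon)}$ is the $\phi$-independent $2\times2$ matrix of (\ref{vje}) and the outcome $(j,\,\varepsilon)$ occurs with probability $\kappa^{(j,\,\varepsilon)}(\phi)^2$, hence is possible exactly when $\tilde V^{(j,\,\varepsilon)}\ket{\phi}\neq0$. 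So, by the definitions of accomplishment and of $\mathcal T$, membership $\bfit{T}\in\mathcal T$ is the statement: for each $(j,\,\varepsilon)$ there is a \emph{single} $U^{(j,\,\varepsilon)}\in\mathrm U(2)$ (depending on $(j,\,\varepsilon)$ but not on $\ket{\phi}$) with $U^{(j,\,\varepsilon)}V^{(j,\,\varepsilon)}\ket{\phi}=\ket{\phi}$ for every unit $\ket{\phi}$ with $\tilde V^{(j,\,\varepsilon)}\ket{\phi}\neq0$.

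For the direction ``$\Rightarrow$'' I would fix $(j,\,\varepsilon)$, set $U:=U^{(j,\,\varepsilon)}$ and $V:=\tilde V^{(j,\,\varepsilon)}$, and verify that the pair $(U,V)$ meets hypothesis~(i) of Lemma~\ref{l1} with $n=2$. Indeed, for a unit vector $\ket{\phi}$ with $V\ket{\phi}\neq0$, clearing the denominator in $U V^{(j,\,\varepsilon)}\ket{\phi}=\ket{\phi}$ gives $UV\ket{\phi}=\kappa^{(j,\,\varepsilon)}(\phi)\ket{\phi}$; for $\ket{\phi}\in\ker V$ the same identity reads $0=0$; and positive homogeneity extends it to every $\ket{\phi}\in\mathbb C^2\setminus\{0\}$. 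Thus there is a scalar $\kappa(\phi)$ with $UV\phi=\kappa(\phi)\phi$ for all $\phi\neq0$, and Lemma~\ref{l1} immediately returns a complex number $\kappa^{(j,\,\varepsilon)}$ with $\tilde V^{(j,\,\varepsilon)}\in\kappa^{(j,\,\varepsilon)}\mathrm U(2)$.

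For ``$\Leftarrow$'' I would assume $\tilde V^{(j,\,\varepsilon)}=\kappa^{(j,\,\varepsilon)}\tilde U^{(j,\,\varepsilon)}$ with $\tilde U^{(j,\,\varepsilon)}\in\mathrm U(2)$ for each $(j,\,\varepsilon)$, and split on whether $\kappa^{(j,\,\varepsilon)}=0$. If $\kappa^{(j,\,\varepsilon)}=0$ then $\tilde V^{(j,\,\varepsilon)}=0$, the outcome $(j,\,\varepsilon)$ never arises for any target, and nothing is required of that pair. If $\kappa^{(j,\,\varepsilon)}\neq0$, unitarity of $\tilde U^{(j,\,\varepsilon)}$ forces $\kappa^{(j,\,\varepsilon)}(\phi)=|\kappa^{(j,\,\varepsilon)}|$ to be constant in $\ket{\phi}$, so $V^{(j,\,\varepsilon)}=(\kappa^{(j,\,\varepsilon)}/|\kappa^{(j,\,\varepsilon)}|)\,\tilde U^{(j,\,\varepsilon)}$ is one fixed element of $\mathrm U(2)$ independent of the target; Bob then applies the (target-independent) unitary $U^{(j,\,\varepsilon)}:=(V^{(j,\,\varepsilon)})^{-1}$ and recovers $\ket{\phi}$ in every case, i.e.\ $\bfit{T}\in\mathcal T$. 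Since $\|V^{(j,\,\varepsilon)}\ket{\phi}\|=1$ here, this $U^{(j,\,\varepsilon)}$ coincides with the operator displayed in Theorem~\ref{thm:main}, which is how the ``Moreover'' clause of that theorem will fall out.

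The genuine content is in Lemma~\ref{l1}, which is already available: the hard step is the passage from ``one common unitary sends every ray to itself'' to ``$\tilde V^{(j,\,\varepsilon)}$ is a scalar multiple of a unitary'', resting on the fact that a $2\times2$ matrix all of whose nonzero vectors are eigenvectors must be scalar, with a single eigenvalue. Everything else is bookkeeping; the only points I would watch are carrying the $\phi$-dependent normalizer $\kappa^{(j,\,\varepsilon)}(\phi)$ correctly through the equivalences, and the degenerate outcome $\tilde V^{(j,\,\varepsilon)}=0$ (equivalently $\kappa^{(j,\,\varepsilon)}=0$), which sits harmlessly on both sides of the equivalence. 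I do not anticipate any further obstruction.
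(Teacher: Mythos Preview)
Your proposal is correct and follows essentially the same route as the paper: unwind the definition of $\bfit{T}\in\mathcal T$ via Proposition~\ref{prop:finalstate} to reach $U^{(j,\,\varepsilon)}\tilde V^{(j,\,\varepsilon)}\ket{\phi}=\kappa^{(j,\,\varepsilon)}(\phi)\ket{\phi}$, then invoke Lemma~\ref{l1}. You are in fact slightly more careful than the paper, explicitly disposing of the degenerate cases $\ket{\phi}\in\ker\tilde V^{(j,\,\varepsilon)}$ and $\kappa^{(j,\,\varepsilon)}=0$, but the argument is the same.
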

%%%%
\begin{proof}
Let $\ket{\Phi_*^{(j,\,\varepsilon)}}\in \mathcal{H}$ be the final state after obtaining the observed values $(j,\,\varepsilon)$; that is, there exists $\ket{\Psi_*^{(j,\,\varepsilon)}}\in\Hc{B}$ such that $\ket{\Phi_*^{(j,\,\varepsilon)}}=\ket{\xi_j}\otimes\ket{\eta_\varepsilon}\otimes\ket{\Psi_*^{(j,\,\varepsilon)}}$. 
\quad By the definition of $\mathcal{T}$ and Proposition~\ref{prop:finalstate},  ${\bfit T} \in \mathcal{T}$ if and only if there must exist a unitary matrix $U^{(j,\,\varepsilon)}$ on $\Hc{B}$ such that 
\begin{align}
U^{(j,\,\varepsilon)}\ket{\varPhi^{(j,\,\varepsilon)}} =U^{(j,\,\varepsilon)}\displaystyle\frac{\tilde{V}^{(j,\,\varepsilon)}}{\kappa^{(j,\,\varepsilon)}}\ket{\phi}=  \ket{\phi}
\,\,\Longleftrightarrow\,\,
U^{(j,\,\varepsilon)}\tilde{V}^{(j,\,\varepsilon)}\ket{\phi} = \kappa^{(j,\,\varepsilon)}\ket{\phi}.\nonumber
\end{align}
Here, because $\kappa^{(j,\,\varepsilon)} =\|\tilde{V}^{(j,\,\varepsilon)}\ket{\phi}\|$, this is equivalent to the following by Lemma~\ref{l1}: $\kappa^{(j,\,\varepsilon)}$ is independent of $\ket{\phi}$ and
\begin{align}
\tilde{V}^{(j,\,\varepsilon)}\in \kappa^{(j,\,\varepsilon)}\mathrm{U}(2).\nonumber
\end{align}
\quad\vspace{-5.5\baselineskip}\\
\begin{flushright}\end{flushright}
\end{proof}
\par
In the next section, we will apply the statement of Lemma~\ref{thm:unitarity} and the expression of $\tilde{V}^{(j,\,\varepsilon)}$ in (\ref{v}).

\subsection{A necessary condition of measurement}
In this section, we will show that to accomplish the quantum teleportation, the eigenbasis of the observables on  $\mathcal{B}_1$ and $\mathcal{B}_2$ must be different from each computational standard basis. More precisely, we obtain the following theorem:
\begin{lemma}
{\rm If ${\bfit T} \in \mathcal{T}$, $H_1 \neq I_2$ and $H_2 \neq I_\infty$.}
\end{lemma}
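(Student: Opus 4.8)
The plan is to prove the contrapositive of each implication by combining Lemma~\ref{thm:unitarity} with the explicit form of $\tilde V^{(j,\varepsilon)}$ in (\ref{v}). Recall from (\ref{v}) that, after absorbing $C_1$ (which is unitary) on the right, the unitarity of $\tilde V^{(j,\varepsilon)}$ is equivalent to the unitarity of the row-vector-stacked operator with rows $\bra{\eta_\varepsilon}\,{\rm diag}(\overline{\alpha_{2j}}\beta_R,\overline{\alpha_{0j}}\beta_R)$ and $\bra{\eta_\varepsilon}\,{\rm diag}(\overline{\alpha_{0j}}\beta_L,\overline{\alpha_{(-2)j}}\beta_L)$, where $\alpha_{jk}=(H_1)_{jk}$ and $\beta_\varepsilon=\bra\varepsilon C_2\ket\psi$. (Here I follow the excerpt's convention; note $\alpha$ refers to the entries of $H_2$ in the theorem but to $H_1$ in the proof section — I will use the proof-section convention and write $(H_2)_{jk}$ where the $H_2$-entries are meant.) The key first observation is that if $H_1=I_2$ then $\ket{\eta_R}=\ket R$ and $\ket{\eta_L}=\ket L$, so $\bra{\eta_L}{\rm diag}(a,b)=(0,b)$ has a zero first component; then for $\varepsilon=L$ the matrix $\tilde V^{(j,L)}$ has a zero entry in position $(1,1)$ (or after multiplying by $C_1$, a rank-deficiency that cannot be repaired), so it cannot lie in $\kappa\,\mathrm{U}(2)$ for any $j$, contradicting Lemma~\ref{thm:unitarity}.

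Next I would handle $H_2$. Suppose $H_2=I_\infty$. Then $\alpha_{jk}=(H_2)_{jk}=\delta_{jk}$ for $j,k\in\{0,\pm2\}$, so the coefficients $\braket{\xi_j|2}$, $\braket{\xi_j|0}$, $\braket{\xi_j|-2}$ appearing in (\ref{eq:vR})–(\ref{eq:vL}) are all either $0$ or $1$ and at most one is nonzero for each fixed $j$. Plugging $j=2$: only $\braket{\xi_2|2}=1$ survives, so $\bra{v_R^{(2,\varepsilon)}}=\bra{\eta_\varepsilon}Q_1\beta_R$ and $\bra{v_L^{(2,\varepsilon)}}=0$; hence $\tilde V^{(2,\varepsilon)}$ has a zero row and is not in $\kappa\,\mathrm{U}(2)$. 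By Lemma~\ref{thm:unitarity} this forces ${\bfit T}\notin\mathcal T$. So $H_2\ne I_\infty$ whenever ${\bfit T}\in\mathcal T$.

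The two arguments are essentially parallel: in each case, the ``trivial'' choice of measurement basis makes one of the component functionals $\bra{v_R^{(j,\varepsilon)}}$ or $\bra{v_L^{(j,\varepsilon)}}$ vanish (or makes the stacked $2\times2$ matrix singular) for at least one pair $(j,\varepsilon)$, which is incompatible with membership in $\kappa\,\mathrm{U}(2)$ since unitary matrices are invertible. The main thing to be careful about — and the only place a genuine check is needed rather than a one-line observation — is the interaction with $C_1$: multiplying by the unitary $C_1$ on the right in (\ref{v}) cannot turn a rank-$\le1$ row-stacked operator into a rank-$2$ one, so the rank obstruction genuinely survives; I would state this explicitly (rank is invariant under right multiplication by an invertible matrix) rather than leaving it implicit. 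With that remark in place, both claims follow immediately from Lemma~\ref{thm:unitarity}, and no condition on $C_1$, $C_2$, or $\ket\psi$ is needed for this necessary condition.
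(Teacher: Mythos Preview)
Your approach coincides with the paper's: argue the contrapositive by exhibiting, for each trivial choice of $H_1$ or $H_2$, a specific outcome $(j,\varepsilon)$ at which $\tilde V^{(j,\varepsilon)}$ is singular, and then invoke Lemma~\ref{thm:unitarity}; the paper picks $\varepsilon=R$ for the $H_1=I_2$ case and $(j,\varepsilon)=(2,R)$ for the $H_2=I_\infty$ case, while your choices ($\varepsilon=L$, respectively $j=2$ with arbitrary $\varepsilon$) work by the same mechanism. One small point of precision in the $H_1=I_2$ argument: you should say the entire first \emph{column} of the pre-$C_1$ matrix vanishes (since \emph{both} rows are of the form $\bra L\cdot\mathrm{diag}(\cdot,\cdot)$), not just the $(1,1)$ entry --- a single zero entry does not by itself obstruct membership in $\kappa\,\mathrm{U}(2)$, though your rank-deficiency remark shows you already have the right picture.
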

\begin{proof}
\quad We show the contrapositive of the theorem: if $H_1 = I_{2}$ or $H_2 = I_{\infty}$, ${\bfit T} \notin \mathcal{T}$, that is, by Lemma~\ref{thm:unitarity} and (\ref{v}),
\begin{eqnarray}
\onebytwo {\bra{\vr^{(j,\,\varepsilon)}}}{\bra{\vl^{(j,\,\varepsilon)}}} = \displaystyle\onebytwo {\bra{\eta_\varepsilon}
\twobytwo{\overline{\alpha_{2j}}\beta_R}{0}{0}{\overline{\alpha_{0j}}\beta_R}
}{\bra{\eta_\varepsilon}
\twobytwo{\overline{\alpha_{0j}}\beta_L}{0}{0}{\overline{\alpha_{{(-2)}j}}\beta_L} 
}C_1 \notin {}^\forall\kappa{\rm U}(2).\,\,
 \label{imp} 
\end{eqnarray} \par
In case of $H_1 = I_{2}$, $\ket{\eta_\varepsilon}$ is equal to $\ket{\varepsilon}$, so
\begin{eqnarray}
\onebytwo {\bra{\vr^{(j,\,\varepsilon)}}}{\bra{\vl^{(j,\,\varepsilon)}}} = \displaystyle\onebytwo {\bra{\varepsilon}
\twobytwo{\overline{\alpha_{2j}}\beta_R}{0}{0}{\overline{\alpha_{0j}}\beta_R}
}{\bra{\varepsilon}
\twobytwo{\overline{\alpha_{0j}}\beta_L}{0}{0}{\overline{\alpha_{{(-2)}j}}\beta_L} 
}C_1. \nonumber
\end{eqnarray}\par
Now, when $(j,\,\varepsilon) = (j,\,R)$, we obtain
\begin{eqnarray}
\onebytwo { [1\,\,\,\,0]
\twobytwo{\overline{\alpha_{2j}}\beta_R}{0}{0}{\overline{\alpha_{0j}}\beta_R}
}{ \left[ 1\,\,\,\,0 \right]
\twobytwo{\overline{\alpha_{0j}}\beta_L}{0}{0}{\overline{\alpha_{{(-2)}j}}\beta_L} 
}=
\twobytwo{\overline{\alpha_{2j}}\beta_R}{0}{\overline{\alpha_{0j}}\beta_L}{0}.\nonumber
\end{eqnarray}
It is followed by ${\rm det}\onebytwo {\bra{\vr^{(j,\,R)}}}{\bra{\vl^{(j,\,R)}}}= 0$, and it implies (\ref{imp}).\par
In case of $H_2 = I_{\infty}$, $\ket{\xi_j}$ is equal to $\ket{j}$, so
\begin{eqnarray}\nonumber
\onebytwo {\bra{\vr^{(j,\,\varepsilon)}}}{\bra{\vl^{(j,\,\varepsilon)}}}= \displaystyle\onebytwo {
\bra{\eta_\varepsilon}
\twobytwo{\overline{\alpha_{2j}}\beta_R}{0}{0}{\overline{\alpha_{0j}}\beta_R}
}{\bra{\eta_\varepsilon}
\twobytwo{\overline{\alpha_{0j}}\beta_L}{0}{0}{\overline{\alpha_{{(-2)}j}}\beta_L} 
}C_1
= \displaystyle\onebytwo {
\bra{\eta_\varepsilon}
\twobytwo{\delta_{2j}\beta_R}{0}{0}{\delta_{0j}\beta_R}
}{\bra{\eta_\varepsilon}
\twobytwo{\delta_{0j}\beta_L}{0}{0}{\delta_{(-2)j}\beta_L} 
}C_1,
\end{eqnarray}
where
\begin{align}\nonumber
\braket{\xi_j|k}=\braket{j|k}=\delta_{jk}=
\left\{ \begin{array}{ll}
1 & (j=k)\\
0 & (j\neq k)
\end{array} \right. .
\end{align}
\par
Now, we put $H_1 = \twobytwo{a}{b}{c}{d}$. Because $\ket{\eta_\varepsilon} = H_1 \ket{\varepsilon}$, we can rewrite $\onebytwo {\bra{\vr^{(j,\,\varepsilon)}}}{\bra{\vl^{(j,\,\varepsilon)}}}$ as following:
\begin{eqnarray}\nonumber
\onebytwo {\bra{\vr^{(j,\,\varepsilon)}}}{\bra{\vl^{(j,\,\varepsilon)}}} = 
\onebytwo {
\bra{\varepsilon}H_1^\dagger
\twobytwo{\delta_{2j}\beta_R}{0}{0}{\delta_{0j}\beta_R}
}{\bra{\varepsilon}H_1^\dagger
\twobytwo{\delta_{0j}\beta_L}{0}{0}{\delta_{(-2)j}\beta_L} 
}C_1  = 
\onebytwo {
\bra{\varepsilon}\twobytwo{\overline{a}\delta_{2j}\beta_R}{\overline{c}\delta_{0j}\beta_R}{\overline{b}\delta_{2j}\beta_R}{\overline{d}\delta_{0j}\beta_R}
}{\bra{\varepsilon}\twobytwo{\overline{a}\delta_{0j}\beta_L}{\overline{c}\delta_{(-2)j}\beta_L}{\overline{b}\delta_{0j}\beta_L}{\overline{d}\delta_{(-2)j}\beta_L}}C_1. \nonumber
\end{eqnarray}\par
Under here, if $(j,\,\varepsilon) = (2,\,R)$,
\begin{eqnarray}
\onebytwo {
[1\,\,\,\,0]\twobytwo{\overline{a}\beta_R}{0}{\overline{b}\beta_R}{0}
}{\left[1\,\,\,\,0\right]\twobytwo{0}{0}{0}{0}} \nonumber
=\twobytwo{\overline{a}\beta_R}{0}{0}{0}.
\end{eqnarray}
It is followed by ${\rm det}\onebytwo {\bra{\vr^{(2,\,R)}}}{\bra{\vl^{(2,\,R)}}}= 0$, and it implies  (\ref{imp}). 
\end{proof}
\subsection{Two conditions for $\bra{v_{L}^{(j,\,\varepsilon)}}$, $\bra{v_{R}^{(j,\,\varepsilon)}}$}
By Lemma~\ref{thm:unitarity}, the problem is reduced to find a condition for the unitarity of $\tilde{V}^{(j,\,\varepsilon)}$ except a constant multiplicity. Since 
    \[ \tilde{V}^{(j,\,\varepsilon)}=\begin{bmatrix} \bra{v_R^{(j,\,\varepsilon)}} \\ \bra{v_L^{(j,\,\varepsilon)}} \end{bmatrix}, \]
the two vectors in $\Hc{B}$ must satisfy the following two conditions as the corollary of Lemma~\ref{thm:unitarity}.
\begin{corollary}
{\rm $\bfit T\in \mathcal{T}$ if and only if the two row vectors of $\tilde{V}^{(j,\,\varepsilon)};$  $\,\,\bra{\vr^{(j,\,\varepsilon)}}$ and $\bra{\vl^{(j,\,\varepsilon)}}$, satisfy}
\begin{eqnarray}
[{\bf Condition\,I}]: &\|\vr^{(j,\,\varepsilon)}\|^2=\|\vl^{(j,\,\varepsilon)}\|^2\nonumber \\
\left[{\bf Condition\,II}\right]: & \braket{\vr^{(j,\,\varepsilon)}|\vl^{(j,\,\varepsilon)}}=0 \nonumber
\end{eqnarray}
{\rm for any observed values $(j,\,\varepsilon)$.}
\end{corollary}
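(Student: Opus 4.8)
The plan is to translate the two scalar conditions [Condition I] and [Condition II] into explicit equations in the parameters of $\bfit{T}$, and then show that solving them simultaneously over all six pairs $(j,\varepsilon)$ forces exactly conditions \textbf{(I)}, \textbf{(II)}, \textbf{(III)} of Theorem~\ref{thm:main}. The starting point is the expression~(\ref{v}) for $\tilde{V}^{(j,\,\varepsilon)}$: writing $\ket{\eta_\varepsilon}=H_1\ket{\varepsilon}$ and expanding $C_1=\twobytwo{}{}{}{}$ in the computational basis, I would compute the two row vectors $\bra{v_R^{(j,\,\varepsilon)}}$ and $\bra{v_L^{(j,\,\varepsilon)}}$ as explicit $1\times 2$ complex vectors whose entries are products of the form $\overline{(H_2)_{\cdot j}}\,\beta_{R/L}\,\overline{(H_1)_{\cdot\varepsilon}}\,(C_1)_{\cdot\cdot}$. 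The key structural observation is that $C_1$ appears only as a common right unitary factor on \emph{both} rows, so $\|v_R^{(j,\varepsilon)}\|$, $\|v_L^{(j,\varepsilon)}\|$ and $\braket{v_R^{(j,\varepsilon)}|v_L^{(j,\varepsilon)}}$ are all unchanged if we drop $C_1$; this is what yields the independence of $C_1$ asserted in the Remark, and I would record it as a first reduction.

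Next I would handle [Condition I]. After removing $C_1$, $\|v_R^{(j,\varepsilon)}\|^2$ and $\|v_L^{(j,\varepsilon)}\|^2$ become expressions of the shape $|\beta_R|^2\big(|\alpha_{2j}|^2|(H_1)_{R\varepsilon}|^2+|\alpha_{0j}|^2|(H_1)_{L\varepsilon}|^2\big)$ and $|\beta_L|^2\big(|\alpha_{0j}|^2|(H_1)_{R\varepsilon}|^2+|\alpha_{(-2)j}|^2|(H_1)_{L\varepsilon}|^2\big)$, where $\alpha_{jk}=(H_2)_{jk}$. Demanding equality for both $\varepsilon=R$ and $\varepsilon=L$ and for all $j\in\{-2,0,2\}$ gives a linear system in the quantities $|(H_1)_{R\varepsilon}|^2$. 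Using $H_1\in\mathrm{U}(2)$ (so the columns are unit vectors and $|(H_1)_{RR}|^2+|(H_1)_{LR}|^2=1$ etc.), I expect this system to force simultaneously: $|\beta_R|^2=|\beta_L|^2=\tfrac12$ (condition \textbf{(II)}, since $\|C_2\psi\|=1$), the balance $|(H_1)_{RR}|=|(H_1)_{RL}|$ i.e. $|\braket{R|H_1|R}|=|\braket{R|H_1|L}|$ (condition \textbf{(I)}), and the row-balance $|\alpha_{2j}|=|\alpha_{(-2)j}|$ for every $j$ (which is the modulus half of \textbf{(III)-(ii)}, and is also automatically satisfied in the case \textbf{(III)-(i)}). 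The bookkeeping here is the routine part; the slightly delicate point is checking that degenerate sub-cases (e.g.\ some $\alpha_{\cdot j}$ vanishing) do not open up spurious extra solutions, and this is where the matrix families $\bfit{H}$ in \textbf{(III)-(i)} will appear — they are precisely the unitary $3\times 3$ matrices with some zero pattern that still satisfy [Condition I] and [Condition II].

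The main obstacle is [Condition II], $\braket{v_R^{(j,\varepsilon)}|v_L^{(j,\varepsilon)}}=0$. After dropping $C_1$ this inner product is (up to $\overline{\beta_R}\beta_L$) a sum of the form $\alpha_{2j}\overline{\alpha_{0j}}\,|(H_1)_{R\varepsilon}|^2 + \alpha_{0j}\overline{\alpha_{(-2)j}}\,|(H_1)_{L\varepsilon}|^2$, together with cross terms $\overline{(H_1)_{R\varepsilon}}(H_1)_{L\varepsilon}$ and its conjugate that couple the two rows of $H_1$. Imposing vanishing for both $\varepsilon$ and exploiting $|(H_1)_{R\varepsilon}|^2=|(H_1)_{L\varepsilon}|^2=\tfrac12$ (already forced by [Condition I]) and the orthogonality of the two columns of $H_1$, I would subtract the $\varepsilon=R$ and $\varepsilon=L$ equations to isolate a phase condition. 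I expect this to collapse to: for each $j$, either $\alpha_{0j}=0$ (a column of $\tilde H_2$ has its middle entry zero — leading to the zero-pattern matrices of $\bfit{H}$ and case \textbf{(III)-(i)}), or $\arg\alpha_{2j}+\arg\alpha_{(-2)j}-2\arg\alpha_{0j}\in(2\mathbb{Z}+1)\pi$ (case \textbf{(III)-(ii)}). One must then argue that these pointwise-in-$j$ alternatives, combined with unitarity of $\tilde H_2$, cannot be mixed arbitrarily: either $\tilde H_2$ lands in one of the three $\bfit{H}$-patterns, or the phase condition holds for all three $j$. Finally, for the converse direction I would verify that any $\bfit{T}$ satisfying \textbf{(I)},\textbf{(II)},\textbf{(III)} makes every $\tilde V^{(j,\varepsilon)}$ a scalar multiple of a unitary, and read off $U^{(j,\varepsilon)}=\|V^{(j,\varepsilon)}\ket{\phi}\|^{-1}(V^{(j,\varepsilon)})^{-1}$ as in the theorem statement; this direction is a direct substitution once the forward analysis has identified the right normal forms.
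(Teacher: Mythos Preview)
You have misidentified what the Corollary asks you to prove. The statement is simply that $\bfit{T}\in\mathcal{T}$ is equivalent to the two row vectors of $\tilde V^{(j,\varepsilon)}$ having equal norm and being orthogonal, for every $(j,\varepsilon)$. In the paper this is an immediate consequence of Lemma~\ref{thm:unitarity}, which already shows $\bfit{T}\in\mathcal{T}\Leftrightarrow \tilde V^{(j,\varepsilon)}\in\kappa^{(j,\varepsilon)}\mathrm{U}(2)$ for every $(j,\varepsilon)$; the Corollary then follows from the elementary fact that a $2\times 2$ matrix $M$ lies in $\kappa\,\mathrm{U}(2)$ for some scalar $\kappa$ if and only if $MM^*=|\kappa|^2 I_2$, i.e.\ its rows are orthogonal and of equal norm. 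No analysis of $H_1$, $H_2$, $C_2$, $\ket{\psi}$ is needed here at all.

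What you have outlined is instead a proof strategy for Theorem~\ref{thm:main}: starting from Conditions~I and~II and deriving the explicit constraints {\bf (I)}, {\bf (II)}, {\bf (III)} on the parameters. That is the content of Sections~5.5--5.7 of the paper, which \emph{use} this Corollary as their starting point; so invoking Theorem~\ref{thm:main} to prove the Corollary would be circular in the paper's logical order. Incidentally, one technical point in your sketch is also off: because the matrices in~(\ref{v}) that multiply $\bra{\eta_\varepsilon}$ are diagonal, the inner product $\braket{v_R^{(j,\varepsilon)}|v_L^{(j,\varepsilon)}}$ reduces (after cancelling $C_1$) to $\bra{\eta_\varepsilon}\mathrm{diag}(A',B')\ket{\eta_\varepsilon}$ as in~(\ref{3'}), with no cross terms in $\overline{(H_1)_{R\varepsilon}}(H_1)_{L\varepsilon}$.
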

\begin{proof}
By the expression of $\tilde{V}^{(j,\,\varepsilon)}$ in (\ref{v}) and Lemma~\ref{thm:unitarity}, we obtain the desired condition. 
\end{proof}
From now on, we find more useful equivalent expressions of Conditions I and II.
\subsection{Equivalent expression of $[{\bf Condition\,I}]$}
From the definition of Condition I and the expressions of $\bra{v_R^{(j,\,\varepsilon)}}$ and $\bra{v_L^{(j,\,\varepsilon)}}$ in (\ref{v}), we have
\begin{align}
{\bf [Condition\, I]} & \Leftrightarrow  ||\vr ^{(j,\,\varepsilon)}||^2 = ||\vl ^{(j,\,\varepsilon)}||^2 \nonumber \\
&\Leftrightarrow  \scalebox{0.8}[1]{$\bra{\eta_\varepsilon}
\left[ \begin{array}{cc}
|\alpha_{2j}|^2|\beta_R|^2 - |\alpha_{0j}|^2|\beta_L|^2 & 0 \\
0 & |\alpha_{0j}|^2|\beta_R|^2 - |\alpha_{(-2)j}|^2|\beta_L|^2
\end{array} \right]
\ket{\eta_\varepsilon} = 0. $}\label{3}
\end{align}
Here, we put $A:=|\alpha_{2j}|^2|\beta_R|^2 - |\alpha_{0j}|^2|\beta_L|^2$ and $B:=|\alpha_{0j}|^2|\beta_R|^2 - |\alpha_{(-2)j}|^2|\beta_L|^2$.
\begin{align}
(\ref{3}) &\Longleftrightarrow  \bra{\eta_\varepsilon}
\left[ \begin{array}{cc}
A & 0 \\
0 & B
\end{array} \right]
\ket{\eta_\varepsilon} = 0 \nonumber \\
& \Longleftrightarrow 
\left(\begin{array}{l}
X_1 : ``\left[ \begin{array}{cc}
A & 0 \\
0 & B
\end{array} \right] = O\," \\ {\rm or} \\
Y_1 : ``\twobytwo{A}{0}{0}{B}\neq O\text{ and } \left[ \begin{array}{cc}
A & 0 \\
0 & B
\end{array} \right]
\ket{\eta_\varepsilon} ={}^\exists \lambda_{j,\,\varepsilon} \ket{\eta_{\lnot\varepsilon}},"
\end{array}\right.\nonumber
\end{align}
where $\lambda_{j,\,\varepsilon}\in\mathbb{C}$.  Then, we have Condition~I $= ``X_1 \lor Y_1$ for any $(j,\,\varepsilon)$" and in the following, we will transform $X_1$ and $Y_1$, respectively.
\subsubsection{Equivalent transformation of $X_1$}
{ The condition $X_1$ can be characterized by the following more practical condition using the parameters $|\alpha_{jk}|$,  $|\beta_{R}|$ and $|\beta_{L}|$, which decide $H_2$ and $C_2$:}
\begin{lemma}
\begin{eqnarray}
&X_1\,\,\Longleftrightarrow\,\,&\displaystyle|\alpha_{jk}|=\frac{1}{\sqrt{3}} \text{\rm\; for all $j, k \in \{ 0, \pm 2\}$ }
\text{\rm\; and }|\beta_R|=|\beta_L|=\frac{1}{\sqrt{2}}.\nonumber
\end{eqnarray}
\end{lemma}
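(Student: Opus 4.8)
The plan is to prove the two implications separately, using only two normalization facts: that $\tilde H_2\in\mathrm{U}(3)$ (so every row and every column of $\tilde H_2$ is a unit vector), and that $|\beta_R|^2+|\beta_L|^2=\|C_2\psi\|^2=\|\psi\|^2=1$, which holds because $C_2\in\mathrm{U}(2)$ and $\|\psi\|=1$. Throughout I read $X_1$ as the requirement that $A=B=0$ for \emph{every} $j\in\{0,\pm2\}$, which is legitimate since the matrix $\mathrm{diag}(A,B)$ does not involve $\varepsilon$. The implication ``$\Leftarrow$'' is then immediate: if $|\alpha_{jk}|^2=1/3$ for all $j,k$ and $|\beta_R|^2=|\beta_L|^2=1/2$, then $A=|\alpha_{2j}|^2|\beta_R|^2-|\alpha_{0j}|^2|\beta_L|^2=\tfrac13\cdot\tfrac12-\tfrac13\cdot\tfrac12=0$, and likewise $B=0$, for every $j$.

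For ``$\Rightarrow$'', set $r:=|\beta_R|^2$, $\ell:=|\beta_L|^2$, so $r+\ell=1$. First I would dispose of the degenerate cases. If $r=0$ then $\ell=1$, and $A=0$ forces $|\alpha_{0j}|^2\ell=0$, i.e. $\alpha_{0j}=0$ for all $j$; but then the middle row of $\tilde H_2$ vanishes, contradicting $\tilde H_2\in\mathrm{U}(3)$. Symmetrically $\ell\ne0$. Hence $r,\ell\in(0,1)$, and writing $t:=r/\ell>0$ the two equations $A=0$, $B=0$ become, for each $j\in\{0,\pm2\}$,
\[
|\alpha_{0j}|^2=t\,|\alpha_{2j}|^2,\qquad |\alpha_{(-2)j}|^2=t\,|\alpha_{0j}|^2=t^2\,|\alpha_{2j}|^2 .
\]

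Now I would feed this into the unitarity of $\tilde H_2$. Its $j$-th column is $(\alpha_{2j},\alpha_{0j},\alpha_{(-2)j})\T$, so column-normalization gives $|\alpha_{2j}|^2(1+t+t^2)=1$, hence $|\alpha_{2j}|^2=(1+t+t^2)^{-1}$ for \emph{every} $j$, and consequently $|\alpha_{0j}|^2=t(1+t+t^2)^{-1}$ and $|\alpha_{(-2)j}|^2=t^2(1+t+t^2)^{-1}$ for every $j$ as well. Then row-normalization of (say) the first row yields $\sum_{j}|\alpha_{2j}|^2=3(1+t+t^2)^{-1}=1$, i.e. $t^2+t-2=0$, i.e. $(t-1)(t+2)=0$; since $t>0$ this forces $t=1$. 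Therefore $r=\ell=1/2$, so $|\beta_R|=|\beta_L|=1/\sqrt2$, and $|\alpha_{jk}|^2=1/3$, i.e. $|\alpha_{jk}|=1/\sqrt3$, for all $j,k\in\{0,\pm2\}$.

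There is no serious obstacle: the computation is short once the right ingredients are in place. The one point not to overlook is that \emph{both} kinds of normalization of $\tilde H_2$ are needed — the column norms, to see that all three columns share the same modulus pattern $(1:t:t^2)$ up to scaling, and at least one row norm, to pin down $t=1$ — together with the preliminary exclusion of $|\beta_R|=0$ and $|\beta_L|=0$, which is exactly what keeps the system from being underdetermined.
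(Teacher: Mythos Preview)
Your proof is correct and complete. It reaches the same conclusion as the paper but via a noticeably cleaner route.

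The paper's argument for ``$\Rightarrow$'' proceeds as follows: from $A=B=0$ it views $(|\beta_R|^2,|\beta_L|^2)$ as a nontrivial kernel vector, deduces the determinant condition $|\alpha_{2j}|^2|\alpha_{(-2)j}|^2=|\alpha_{0j}|^4$, combines this with the column-norm identity $|\alpha_{2j}|^2+|\alpha_{(-2)j}|^2=1-|\alpha_{0j}|^2$ to treat $|\alpha_{2j}|^2,|\alpha_{(-2)j}|^2$ as roots of a quadratic, and then uses nonnegativity of the discriminant to bound $|\alpha_{0j}|^2\le 1/3$; the row norm of the middle row then forces equality everywhere, after which $|\beta_R|=|\beta_L|$ follows. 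Your approach instead disposes of the degenerate cases $|\beta_R|=0$ or $|\beta_L|=0$ first, introduces the single parameter $t=|\beta_R|^2/|\beta_L|^2$, and reads off the geometric pattern $|\alpha_{2j}|^2:|\alpha_{0j}|^2:|\alpha_{(-2)j}|^2=1:t:t^2$ directly from $A=B=0$; column normalization fixes the scale and one row norm pins down $t=1$. This avoids the quadratic/discriminant detour entirely and makes transparent why both kinds of normalization are indispensable, a point the paper leaves implicit. The trade-off is that you must handle the boundary cases $r=0,\ell=0$ explicitly, whereas in the paper they are absorbed (somewhat silently) into the later contradiction with unitarity.
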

\begin{proof}
Assume $|\alpha_{jk}|=1/\sqrt{3}$ for all $k,j$  and $|\beta_R|=|\beta_L|=1/\sqrt{2}$, it is easy to check that $X_1$ holds. 
Let us consider the inverse.
Assume $X_1$ holds. In this case, we obtain
\begin{align}
A &=|\alpha_{2j}|^2|\beta_R|^2 - |\alpha_{0j}|^2|\beta_L|^2=0, \nonumber \\
B &=|\alpha_{0j}|^2|\beta_R|^2 - |\alpha_{(-2)j}|^2|\beta_L|^2=0, \nonumber
\end{align}
that is, 
\begin{align}
\left[ \begin{array}{cc}
|\alpha_{2j}|^2 & -|\alpha_{0j}|^2 \\
|\alpha_{0j}|^2 & -|\alpha_{(-2)j}|^2 \\
\end{array} \right]
\left[ \begin{array}{c}
|\beta_R|^2 \\
|\beta_L|^2
\end{array} \right]
=
{\bf 0}. \nonumber
\end{align}
Because of $\T[|\beta_R|^2\,\,|\beta_L|^2]\neq {\bf 0},$ we have
\begin{eqnarray}
{\rm det} \begin{bmatrix}
|\alpha_{2j}|^2 & -|\alpha_{0j}|^2 \\
|\alpha_{0j}|^2 & -|\alpha_{(-2)j}|^2 \\
\end{bmatrix} 
=0\nonumber
\end{eqnarray}
This is equivalent to 
\begin{equation}|\alpha_{2j}|^2 |\alpha_{(-2)j}|^2=\left(|\alpha_{0j}|^2\right)^2. \label{prod} 
\end{equation}
On the other hand, by the unitarity of $\tilde{H}_2$, we have
\begin{equation}
|\alpha_{2j}|^2 + |\alpha_{(-2)j}|^2 =1 - |\alpha_{0j}|^2. \label{sum}
\end{equation}
for any $j=-2,0,2$. 
By (\ref{prod}) and (\ref{sum}), $|\alpha_{2j}|^2$, $|\alpha_{(-2)j}|^2$ are the solutions of the following quadratic equation:
\begin{eqnarray}
t^2 - (1 - |\alpha_{0j}|^2)t + \left(|\alpha_{0j}|^2\right)^2 = 0. \nonumber
\end{eqnarray}
Its solution is
\begin{eqnarray}
t = \frac{1 - |\alpha_{0j}|^2 \pm \sqrt{D}}{2},\quad D = -(3|\alpha_{0j}|^2-1)(|\alpha_{0j}|^2+1). \nonumber
\end{eqnarray}
Here, because the solution $t$ is a real number, the discriminant  $D\geq 0$, i.e., $3|\alpha_{0j}|^2-1\leq 0$. Therefore, because $|\alpha_{0j}|\geq 0$,
\begin{eqnarray}
0 \leq |\alpha_{0j}|^2 \leq \frac{1}{3}. \nonumber
\end{eqnarray}\par
Here, the necessary condition for the unitarity of $\tilde{H}_2$ that $|\alpha_{02}|^2 + |\alpha_{00}|^2 + |\alpha_{0{(-2)}}|^2=1$ is satisfied by only the case for
\begin{eqnarray}
|\alpha_{02}|^2=|\alpha_{00}|^2=|\alpha_{0{(-2)}}|^2=\frac{1}{3}.\nonumber
\end{eqnarray}
Hence, for $j\in \{0,\,\pm 2\}$, we obtain $D=0$, and then $t=1/3$ holds. Therefore, for $j,\,k\in \{0,\,\pm 2\}$,
\begin{eqnarray}
|\alpha_{jk}|=\frac{1}{\sqrt{3}},\nonumber
\end{eqnarray}
which implies,
\begin{align}\nonumber
A=B=\frac{1}{3}(|\beta_R|^2-|\beta_L|^2)=0\,\,\Longleftrightarrow\,\, |\beta_R|=|\beta_L| =\frac{1}{\sqrt{2}}.
\end{align}
\quad\vspace{-4.5\baselineskip}\\
\begin{flushright}\end{flushright}
\end{proof}
%%%%%
\subsubsection{Equivalent transformation of $Y_1$}
{ The condition $Y_1$ is equivalently deformed by the following lemma. This shows that the parameters of $H_1$ are independent of the others.}
\begin{lemma}{\rm Let measurement operator of $\Hc{A}$ be
\begin{eqnarray}
H_1 =\left[ \begin{array}{cc}
a & b \\
c & d
\end{array} \right],\nonumber
\end{eqnarray}
which is unitary. Then, we have}
\begin{align}
Y_1\,\,\Longleftrightarrow\,\,&
|\alpha_{2j}|^2|\beta_R|^2 - |\alpha_{0j}|^2|\beta_L|^2 =-|\alpha_{0j}|^2|\beta_R|^2 + |\alpha_{(-2)j}|^2|\beta_L|^2\nonumber\\
&\text{\rm\; for all $ j, k \in \{ 0, \pm 2\}$}\text{\rm\; and}\,\, |a|=|b|. \nonumber 
\end{align}
\end{lemma}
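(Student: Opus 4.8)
\emph{Proof proposal.}\quad The plan is to decouple the condition $Y_1$ into a constraint on $H_1$ alone ($|a|=|b|$) and a constraint on the moduli $|\alpha_{jk}|,\,|\beta_\varepsilon|$. Set $D:=\twobytwo{A}{0}{0}{B}$ with
\[
A=|\alpha_{2j}|^2|\beta_R|^2-|\alpha_{0j}|^2|\beta_L|^2,\qquad
B=|\alpha_{0j}|^2|\beta_R|^2-|\alpha_{(-2)j}|^2|\beta_L|^2,
\]
which is a real diagonal matrix, hence a self-adjoint operator on $\Hc{A}\cong\mathbb{C}^2$, and recall that $\{\ket{\eta_R},\ket{\eta_L}\}=\{H_1\ket{R},H_1\ket{L}\}$ is an orthonormal basis because $H_1\in\mathrm{U}(2)$. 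The observation underlying the very form of $Y_1$ is that $\bra{\eta_\varepsilon}D\ket{\eta_\varepsilon}=0$ says $D\ket{\eta_\varepsilon}\perp\ket{\eta_\varepsilon}$, which in a two-dimensional space with this orthonormal basis is the same as $D\ket{\eta_\varepsilon}\in\mathbb{C}\ket{\eta_{\lnot\varepsilon}}$; conversely pairing $D\ket{\eta_\varepsilon}=\lambda_{j,\varepsilon}\ket{\eta_{\lnot\varepsilon}}$ with $\ket{\eta_\varepsilon}$ recovers $\bra{\eta_\varepsilon}D\ket{\eta_\varepsilon}=0$. Thus, as $Y_1$ is used inside Condition~I (over all $(j,\varepsilon)$), its content for a fixed $j$ is exactly: $D\neq O$, and $\bra{\eta_\varepsilon}D\ket{\eta_\varepsilon}=0$ for both $\varepsilon\in\{R,L\}$.

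First I would pass to coordinates: writing $\ket{\eta_R}=\onebytwo{a}{c}$, $\ket{\eta_L}=\onebytwo{b}{d}$ and expanding the two quadratic forms turns ``$\bra{\eta_\varepsilon}D\ket{\eta_\varepsilon}=0$ for $\varepsilon=R,L$'' into the real homogeneous system
\begin{align}\nonumber
\twobytwo{|a|^2}{|c|^2}{|b|^2}{|d|^2}\onebytwo{A}{B}={\bf 0}.
\end{align}
Now I invoke $H_1\in\mathrm{U}(2)$: the unit-norm conditions on its rows and columns give $|b|^2=|c|^2$, $|a|^2=|d|^2$ and $|a|^2+|b|^2=1$, so the coefficient matrix becomes $\twobytwo{|a|^2}{|b|^2}{|b|^2}{|a|^2}$, whose determinant is $|a|^4-|b|^4=|a|^2-|b|^2$.

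The next step is the case split forced by the clause $D\neq O$, i.e.\ $(A,B)\neq(0,0)$: a $2\times2$ homogeneous system with a nontrivial solution must have vanishing determinant, so $|a|^2=|b|^2$, that is $|a|=|b|$, and then $|a|^2=|b|^2=|c|^2=|d|^2=\tfrac12$. Substituting back, the coefficient matrix collapses to $\tfrac12\twobytwo{1}{1}{1}{1}$, so the system reduces to the single equation $A+B=0$, which, written out, is precisely the displayed identity $|\alpha_{2j}|^2|\beta_R|^2-|\alpha_{0j}|^2|\beta_L|^2=-|\alpha_{0j}|^2|\beta_R|^2+|\alpha_{(-2)j}|^2|\beta_L|^2$. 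This settles the forward direction for that $j$ (imposing $Y_1$ for every $j$ then yields the ``for all $j$'' form). For the converse, if $|a|=|b|$ the coefficient matrix is again $\tfrac12\twobytwo{1}{1}{1}{1}$, so the displayed identity $A+B=0$ at once gives $\bra{\eta_\varepsilon}D\ket{\eta_\varepsilon}=0$ for both $\varepsilon$; together with $D\neq O$ — the standing hypothesis of the $Y_1$ case — this is $Y_1$.

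The only delicate point is keeping the bookkeeping of $D\neq O$ straight: the displayed modulus identity also holds vacuously when $A=B=0$, which $Y_1$ forbids, so one must track whether a given $j$ sits in the $X_1$ branch or the $Y_1$ branch. This is ultimately harmless — when $D=O$ the identity $A=-B$ is automatic and imposes no condition on $H_1$ — and it is cleanly absorbed once this lemma is combined with the preceding ($X_1$) lemma to reassemble Condition~I for all $(j,\varepsilon)$. Everything else (the norm identities coming from unitarity of $H_1$, and the expansion of the two quadratic forms) is elementary $2\times2$ linear algebra.
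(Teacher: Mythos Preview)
Your proof is correct and follows essentially the same route as the paper: both reduce $Y_1$ to the real $2\times 2$ homogeneous system $\twobytwo{|a|^2}{|c|^2}{|b|^2}{|d|^2}\onebytwo{A}{B}={\bf 0}$, use unitarity of $H_1$ to rewrite the coefficient matrix, and then invoke $(A,B)\neq(0,0)$ to force the determinant to vanish, giving $|a|=|b|$ and $A+B=0$. The only difference is in the converse: the paper verifies $D\ket{\eta_\varepsilon}\in\mathbb{C}\ket{\eta_{\lnot\varepsilon}}$ by an explicit componentwise computation using $c=-\varDelta\bar{b}$, $d=\varDelta\bar{a}$, whereas you reuse the quadratic-form equivalence $\bra{\eta_\varepsilon}D\ket{\eta_\varepsilon}=0 \Leftrightarrow D\ket{\eta_\varepsilon}\in\mathbb{C}\ket{\eta_{\lnot\varepsilon}}$ to get both directions from the same linear system --- a slightly cleaner packaging of the same argument.
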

\begin{proof}
First let us consider the proof of the ``$\Leftarrow$" direction. 
It holds
    \begin{align}
        \begin{bmatrix}1 & 0 \\ 0 & -1 \end{bmatrix} \ket{\eta_R} &= \begin{bmatrix} b \\ -d \end{bmatrix}
        =\begin{bmatrix} (b/a)\cdot a \\ (\bar{a}/\bar{b})\cdot c \end{bmatrix} \notag \\
        &= \frac{\,b\,}{a} \begin{bmatrix} b \\ d \end{bmatrix} = \frac{\,a\,}{b} \ket{\eta_L} \label{eq:etaR},
    \end{align}
{where $\ket{\eta_\varepsilon} = H_1 \ket{\varepsilon}$.} Here, the second equality derives from  $c=-\varDelta\bar{b}$ and $d=\varDelta\bar{a}$, where $\varDelta=\det(H_1)$ by the unitarity of $H_1$ and the third equality comes from the last assumption of $|a|=|b|$. 
In the same way, we obtain
    \begin{align}\label{eq:etaL}
        \begin{bmatrix}1 & 0 \\ 0 & -1 \end{bmatrix} \ket{\eta_L} &=  \frac{\,a\,}{b} \ket{\eta_R}.
    \end{align}
The first assumption implies $A=-B$. Then, (\ref{eq:etaR}) and (\ref{eq:etaL}) include
    \[ \begin{bmatrix} A & 0 \\ 0 & B \end{bmatrix} \ket{\eta_R}= A\cdot\frac{\,b\,}{a}\ket{\eta_L}
    \text{ and }
    \begin{bmatrix} A & 0 \\ 0 & B \end{bmatrix} \ket{\eta_L}= A\cdot\frac{\,a\,}{b}\ket{\eta_R}
    \]
Thus, the condition $Y_1$ holds. 
Secondly, assume $Y_1$ holds.
In this case, there exist $\lambda$ and $\lambda'$ such that
\begin{eqnarray}
\left[ \begin{array}{cc}
A & 0 \\
0 & B
\end{array} \right]
\ket{\eta_R} = \lambda \ket{\eta_{L}}\,\,{\rm and}\,\,
\left[ \begin{array}{cc}
A & 0 \\
0 & B
\end{array} \right]
\ket{\eta_L} = \lambda' \ket{\eta_{R}}.
\label{rl}
\end{eqnarray}
Therefore, 
\begin{subequations}
    \begin{empheq}[left = {(\ref{rl})\,\,\Longleftrightarrow\,\,\empheqlbrace \,}, right = {}]{align}
& \left[ \begin{array}{cc}
A & 0 \\
0 & B
\end{array} \right]
H_1 \ket{R} = \lambda H_1 \ket{L}\,  \label{a} \\
    & {\rm and} \nonumber\\
    & \left[ \begin{array}{cc}
A & 0 \\
0 & B
\end{array} \right]
H_1 \ket{L} =\lambda' H_1 \ket{R}. \label{b}
    \end{empheq}
\end{subequations}\par
Let us give further transformation  of (\ref{a}). Because $H_1$ is unitary,
\begin{align}
& H_1^\dagger \twobytwo{A}{0}{0}{B} H_1 \ket{R} = \lambda \ket{L} \nonumber \\
\Longleftrightarrow & \onebytwo{|a|^2A+|c|^2B}{a\overline{b}A+c\overline{d}B} = \onebytwo{0}{\lambda}. \label{c}
\end{align}\par
Similarly, (\ref{b}) is equivalently deformed as follows:
\begin{align}
& H_1^\dagger \twobytwo{A}{0}{0}{B} H_1 \ket{L} = \lambda' \ket{R} \nonumber \\
\Longleftrightarrow& \onebytwo{\overline{a}bA+\overline{c}dB}{|b|^2A+|d|^2B} = \onebytwo{\lambda'}{0}. \label{d}
\end{align} 
Therefore, (\ref{rl}) is equivalent to (\ref{c}) and (\ref{d}), and these are also equivalent to
\begin{eqnarray}
\onebytwo{|a|^2A+|c|^2B}{|b|^2A+|d|^2B}=\twobytwo{|a|^2}{1-|a|^2}{1-|a|^2}{|a|^2}\onebytwo{A}{B} = {\bf 0} \label{e} 
\end{eqnarray}
\begin{center}
and
\end{center}
\begin{eqnarray}
\onebytwo{a\overline{b}A+c\overline{d}B}{\overline{a}bA+\overline{c}dB}=\twobytwo{a\overline{b}}{c\overline{d}}{\overline{a}b}{\overline{c}d}\onebytwo{A}{B} = \twovec{\lambda}{\lambda'}
\label{f}
\end{eqnarray}
Here, we used in (\ref{e}), the unitarity of $H_1$, $|a|^2=|d|^2=1-|b|^2=1-|c|^2$.
Moreover, because of the assumption $\T[A,\,B] \neq {\bf 0}$,
\begin{eqnarray}
\det\twobytwo{|a|^2}{1-|a|^2}{1-|a|^2}{|a|^2}=0 \,\,\Longleftrightarrow\,\, |a| = \frac{1}{\sqrt{2}}. \nonumber
\end{eqnarray}
Then, we have $|a|=|b|$. 
By substituting this result to (\ref{e}), we obtain
\begin{eqnarray}\label{eq:A+B=0}
A+B=0, %\,\,\Longleftrightarrow\,\, A=-B.
\end{eqnarray}
which is equivalent to 
\[ |\alpha_{2j}|^2|\beta_R|^2 - |\alpha_{0j}|^2|\beta_L|^2 =-|\alpha_{0j}|^2|\beta_R|^2 + |\alpha_{(-2)j}|^2|\beta_L|^2 \]
for all $j$.

\end{proof}
Note that, by substituting (\ref{eq:A+B=0}) to (\ref{f}), we obtain
\begin{align}
a\bar{b}-c\bar{d}=\frac{\lambda'}{A},\; \bar{a}b-\bar{c}d=\frac{\lambda}{A}. \nonumber
\end{align}
The unitarity of $H_1$ implies $d=\varDelta \bar{a}$, $c=-\varDelta \bar{b}$, where $\varDelta=\det(H_1)$. 
Therefore, we obtain
the constants of the Condition $Y_1$ are  \[\lambda'=2a\bar{b}\cdot A=\frac{\,b\,}{a}\cdot A \text{ and } \lambda=2\bar{a}b\cdot A=\frac{\,a\,}{b}\cdot A\] 
since  $|a|=|b|=1/\sqrt{2}$.
%%%%%%%%%%%%%%%%%%%%%%
\subsection{Calculation of $\left[{\bf Condition\,II}\right]$}
From the definition of [Condition I] and the expressions of $\bra{v_R^{(j,\,\varepsilon)}}$ and $\bra{v_L^{(j,\,\varepsilon)}}$ in (\ref{v}), we have
\begin{align}
{\bf [Condition\,II]}  \Longleftrightarrow & \braket{\vr ^{(j,\,\varepsilon)}|\vl ^{(j,\,\varepsilon)}} = 0 \nonumber \\
\Longleftrightarrow & \bra{\eta_\varepsilon}
\left[ \begin{array}{cc}
\beta_R \alpha_{2j} \overline{\alpha_{0j}\beta_L} & 0 \\
0 & \beta_R \alpha_{0j} \overline{\alpha_{({-2})j}\beta_L}
\end{array} \right]
\ket{\eta_\varepsilon} = 0 \label{3'}
\end{align}
Putting $A':=\beta_R \alpha_{2j} \overline{\alpha_{0j}\beta_L}$ and $B':=\beta_R \alpha_{0j} \overline{\alpha_{({-2})j}\beta_L}$, we decompose (\ref{3'}) into the conditions $X_2$ and $Y_2$, as follows.
\begin{align}
(\ref{3'}) \,\,\Longleftrightarrow\,\, & \bra{\eta_\varepsilon}
\left[ \begin{array}{cc}
A' & 0 \\
0 & B'
\end{array} \right]
\ket{\eta_\varepsilon} = 0 \nonumber \\
\Longleftrightarrow\,\,& 
\left( \begin{array}{l}
X_2 : ``\left[ \begin{array}{cc}
A' & 0 \\
0 & B'
\end{array} \right] = O\," \\ {\rm or} \\
Y_2 : ``\twobytwo{A'}{0}{0}{B'}\neq O\text{ and }\left[ \begin{array}{cc}
A' & 0 \\
0 & B'
\end{array} \right]
\ket{\eta_\varepsilon} ={}^\exists\mu_{j,\,\varepsilon} \ket{\eta_{\lnot\varepsilon}},"
\end{array}\right. \nonumber
\end{align}
where $\mu_{j,\,\varepsilon}\in\mathbb{C}$.  
Then we obtain [Condition II]$=X_2 \lor Y_2$. 
We will transform $X_2$ and $Y_2$ to more useful forms.
\subsubsection{Equivalent transformation of $X_2$}
{The condition $X_2$ is characterized only by the parameters of $H_2$ as follows:}
\begin{lemma}
{\rm {Let $\bfit{H}$ be the set of three dimensional unitary matrices defined by 
    \begin{equation}\label{eq:H} \bfit{H}=\left\{  
    \begin{bmatrix} p & r & 0 \\  0 & 0 & t  \\q & s & 0 \end{bmatrix},\;
    \begin{bmatrix} p & 0 & r \\ 0 & t & 0 \\ q & 0 & s  \end{bmatrix},\;
    \begin{bmatrix} 0 & p & r \\ t & 0 & 0 \\ 0 & q & s \end{bmatrix}
    \in {\rm U}(3) \;:\; 
    |p|=|q|   \right\} 
    \end{equation}
}The condition $X_2$ is equivalent to the following condition; 
\[{ H_2=\tilde{H}_2 \oplus I_{\infty}\text{\,\, with\,\,\,} \tilde{H}_2=
\left[ \begin{array}{ccc}
\alpha_{22} & \alpha_{20} & \alpha_{2{(-2)}} \\
\alpha_{02} & \alpha_{00} & \alpha_{0{(-2)}} \\
\alpha_{{(-2)}2} & \alpha_{{(-2)}0} & \alpha_{{(-2)}{(-2)}} 
\end{array} \right]\in \bfit{H}. }\]}
\end{lemma}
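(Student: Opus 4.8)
The plan is to read the identity $X_2$ off entry by entry, turn it into a sparsity pattern on $\tilde H_2$, and then let the unitarity of $\tilde H_2$ rigidify that pattern into exactly the three block shapes listed in (\ref{eq:H}). The implication ($\Leftarrow$) is immediate and uses nothing about $C_2$ or $\ket{\psi}$: in each of the three shapes of (\ref{eq:H}) every column has the property that either its middle entry vanishes or both of its outer entries vanish, so $\alpha_{2j}\overline{\alpha_{0j}}=0$ and $\alpha_{0j}\overline{\alpha_{(-2)j}}=0$ for every column index $j\in\{2,0,-2\}$; hence $A'=B'=0$ for all $j$, which is exactly $X_2$.

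For ($\Rightarrow$) I would first observe that $X_2$ is only relevant when $\beta_R\overline{\beta_L}\neq 0$: if $\beta_R=0$ (say), then $\bra{\vr^{(j,\,\varepsilon)}}\equiv 0$ by (\ref{v}), so $\tilde V^{(j,\,\varepsilon)}$ has rank at most $1$ and lies in no $\kappa\,\mathrm{U}(2)$, whence $\bfit T\notin\mathcal T$. So in the regime of interest $\beta_R\overline{\beta_L}\neq 0$, and then $X_2$ (for every $j$) is equivalent to: for each column $j$ of $\tilde H_2$, either $\alpha_{0j}=0$ or $\alpha_{2j}=\alpha_{(-2)j}=0$. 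Call column $j$ of \emph{side type} in the first case and of \emph{middle type} in the second; since a unitary matrix has no zero column, each column is of exactly one type, with $\alpha_{0j}=0$ iff it is of side type. A middle-type column is a unit vector supported on the middle coordinate, so it equals $t\,e_{0}$ with $|t|=1$ and puts a modulus-$1$ entry in row $0$, while a side-type column puts $0$ there. Since row $0$ of $\tilde H_2$ is a unit vector, there is exactly one middle-type column (two would give that row norm at least $\sqrt 2$, zero would make it vanish). The other two columns, being of side type, lie in $\mathrm{span}\{e_2,e_{-2}\}$, and as columns of a unitary matrix they form an orthonormal pair there. Reading off which of the three positions $j\in\{2,0,-2\}$ carries the middle-type column then reproduces precisely the three shapes of (\ref{eq:H}): a $2\times 2$ unitary block in rows and columns $\{2,-2\}$, together with a lone unimodular entry in the middle.

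What this structural step does not by itself yield is the extra constraint $|p|=|q|$ built into $\bfit{H}$, and I expect this to be the main obstacle, since $X_2$ cannot detect it (the $2\times 2$ block could be the identity, for which $|p|\neq|q|$). To recover it I would feed back Condition~I, which is available here because membership in $\mathcal T$ requires it anyway. The shapes obtained above contain genuine zeros, so the case $X_1$---which, by the preceding lemma characterizing $X_1$, forces $|\alpha_{jk}|=1/\sqrt 3$ for all $j,k$---is impossible; hence Condition~I must be met through $Y_1$, and the preceding lemma characterizing $Y_1$ then gives $A=-B$ for every $j$ (along with $|a|=|b|$). Evaluating $A=-B$ on the middle-type column gives $|\beta_R|^{2}=|\beta_L|^{2}$, and on a side-type column, whose entries in rows $2,0,-2$ read $p,0,q$, it gives $|p|^{2}|\beta_R|^{2}=|q|^{2}|\beta_L|^{2}$; together these force $|p|=|q|$, so $\tilde H_2\in\bfit{H}$. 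The ($\Leftarrow$) direction above, run together with $|\beta_R|=|\beta_L|$, then closes the equivalence and is consistent with the way the $X_2$ case will be paired with the $Y_1$ branch of Condition~I when the main theorem is assembled.
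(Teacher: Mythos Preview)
Your structural argument for the sparsity pattern is essentially the paper's: dismiss $\beta_R\overline{\beta_L}=0$ (the paper also observes this makes $\det\tilde V^{(j,\varepsilon)}=0$), read $X_2$ as ``for each column, $\alpha_{0j}=0$ or $\alpha_{2j}=\alpha_{(-2)j}=0$,'' and then use unitarity of $\tilde H_2$ (unit row $0$) to force exactly one middle-type column and two side-type columns, yielding the three block shapes.

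Where you diverge is exactly the point you flag: the constraint $|p|=|q|$ does not follow from $X_2$ alone, and you are right that it cannot. The paper's proof of this lemma in fact stops at the three shapes and simply asserts $\tilde H_2\in\bfit H$; the $|p|=|q|$ part is established only in the later fusion step $Y_1\land X_2$ (their Lemma~\ref{lem:B}), where the paper does precisely what you propose---evaluate $A=-B$ on the middle-type column to get $|\beta_R|=|\beta_L|$, and on a side-type column $(p,0,q)$ to get $|p|^2|\beta_R|^2=|q|^2|\beta_L|^2$, hence $|p|=|q|$. So your approach and the paper's coincide in substance; you fold the $|p|=|q|$ derivation into the present lemma, while the paper states the lemma a bit loosely here and supplies that constraint at the fusion stage. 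Either packaging is fine for the proof of the main theorem.
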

\begin{proof}
Assume $\tilde{H}_2\in \bfit{H}$. Then, each raw vector of $\tilde{H}_2$ is of the form $[*,\;0\;,*]$ or $[0,\;*,\;0]$, where ``$*$" takes a nonzero value. Since the computational basis of $\tilde{H}_2$ is $\ket{-2},\ket{0},\ket{2}$ by this order, it holds that $\alpha_{2j}\alpha_{0j}=\alpha_{(-2)j}\alpha_{0j}=0$ for any $j\in\{-2,0,2\}$. Then, we have $A'=B'=0$ which implies the condition $X_2$.
On the other hand, assume the condition $X_2$. In this case, for $A'$ and $B'$, the followings are held:
\begin{align}
A' &=\beta_R \alpha_{2j} \overline{\alpha_{0j}\beta_L}=0, \nonumber \\
B' &=\beta_R \alpha_{0j} \overline{\alpha_{({-2})j}\beta_L}=0. \nonumber
\end{align}
Therefore, 
\begin{align}
& |\beta_R \alpha_{2j}\alpha_{0j}\beta_L| = |\beta_R \alpha_{0j}\alpha_{j({-2})}\beta_L| = 0 \nonumber \\
 \Longleftrightarrow\,\,& |\beta_R\alpha_{0j}\beta_L| = 0\,\,\,{\rm or}\,\,\,|\alpha_{2j}| = |\alpha_{j({-2})}| = 0 \nonumber \\
 \Longleftrightarrow\,\,& 
``(\;|\beta_R|,\;|\beta_L|\;)\in \{(0,1),(1,0)\}"\notag \\ & \qquad\qquad\text{ or }``(\;|\alpha_{0j}|^2,\;|\alpha_{2j}|^2+|\alpha_{(-2)j}|^2\;)\in \{(0,1),(1,0)\}"\nonumber
\end{align}
Here, we used $|\alpha_{0j}|^2+|\alpha_{2j}|^2+|\alpha_{(-2)j}|^2=1$  due to the unitarity of $\tilde{H}_2$ in the last equivalence.
When $(\;|\beta_R|,\,|\beta_L|\;) = (0,\,1)$ or $(1,\,0)$, the determinant of $\tilde{V}^{(j,\,\varepsilon)}$ is  $\det(\tilde{V}^{(j,\,\varepsilon)})=0$ by (\ref{v}), and because of it, the matrix $\tilde{V}^{(j,\,\varepsilon)}$
does not satisfy the condition of Theorem~2. Hence, the conditions we should only impose are
\begin{align*}
({\rm a})\,\,
(\;|\alpha_{0j}|^2,|\alpha_{2j}|^2+|\alpha_{(-2)j}|^2\;) &= (0,1)\\ 
{\rm or}\quad\quad\quad\quad\quad\\
({\rm b})\,\,
(\;|\alpha_{0j}|^2,|\alpha_{2j}|^2+|\alpha_{(-2)j}|^2\;)&=(1,0)
\end{align*}
to each column vector of $\tilde{H}_2$ ($j=-2,0,2$). Each column vector satisfies the condition (a) or (b), however by the unitarity of $\tilde{H}_2$, we notice that one of the column vectors in $\tilde{H}_2$ satisfies the condition (b) and all the rest of the two column vectors satisfy (a) because every {\it raw} vector of $\tilde{H}_2$ must be a unit vector. This implies that $H_2=\tilde{H}_2\oplus I_\infty$ with $\tilde{H}_2\in \bfit{H}$.
Then, we obtained the desired conclusion. 
\end{proof}
\subsubsection{Equivalent transformation of $Y_2$}
By {a similar} discussion {to} that of the condition $Y_1$, we obtain the following lemma. {It is important that the lemma is free from constraints of Alice's coin operator $C_2$. In spite of a similar fashion of the proof, this gives us a different observation from the observation of $Y_1$.}
\begin{lemma}{\rm For all $ j, k \in \{ 0, \pm 2\}$,}
\begin{eqnarray}\nonumber
Y_2 \,\,\Longleftrightarrow\,\, \alpha_{2j} \overline{\alpha_{0j}} = -\alpha_{0j} \overline{\alpha_{({-2})j}} \,\,{\rm and}\,\,  |a|=|b|.
\end{eqnarray}
\end{lemma}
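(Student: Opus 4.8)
The plan is to mirror, essentially line by line, the proof of the preceding lemma characterizing $Y_1$, the only substantive change being that here $A'=\beta_R\alpha_{2j}\overline{\alpha_{0j}\beta_L}$ and $B'=\beta_R\alpha_{0j}\overline{\alpha_{(-2)j}\beta_L}$ are complex rather than real. The feature that makes the conclusion free of any constraint on $C_2$ is that both quantities carry the common scalar $\beta_R\overline{\beta_L}$: indeed $A'=\beta_R\overline{\beta_L}\,\alpha_{2j}\overline{\alpha_{0j}}$ and $B'=\beta_R\overline{\beta_L}\,\alpha_{0j}\overline{\alpha_{(-2)j}}$, and if $\beta_R\overline{\beta_L}=0$ then $A'=B'=0$, which is excluded inside $Y_2$ by the clause $\twobytwo{A'}{0}{0}{B'}\neq O$. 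So whenever $Y_2$ is in force one has $\beta_R\overline{\beta_L}\neq0$ and may cancel it, and then $A'+B'=0$ is the same as $\alpha_{2j}\overline{\alpha_{0j}}=-\alpha_{0j}\overline{\alpha_{(-2)j}}$.

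For the direction ``$\Leftarrow$'', I would assume $\alpha_{2j}\overline{\alpha_{0j}}=-\alpha_{0j}\overline{\alpha_{(-2)j}}$, hence $B'=-A'$, together with $|a|=|b|$. Then $\twobytwo{A'}{0}{0}{B'}=A'\twobytwo{1}{0}{0}{-1}$, and the identities (\ref{eq:etaR}) and (\ref{eq:etaL}) established in the $Y_1$ argument -- which hold precisely because $|a|=|b|$ and $H_1$ is unitary, and which say that $\twobytwo{1}{0}{0}{-1}$ carries $\ket{\eta_R}$ to a scalar multiple of $\ket{\eta_L}$ and $\ket{\eta_L}$ to a scalar multiple of $\ket{\eta_R}$ -- at once give $\twobytwo{A'}{0}{0}{B'}\ket{\eta_R}=\mu_{j,R}\ket{\eta_L}$ and $\twobytwo{A'}{0}{0}{B'}\ket{\eta_L}=\mu_{j,L}\ket{\eta_R}$ with suitable constants $\mu_{j,R},\mu_{j,L}$, which is the second clause of $Y_2$. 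Exactly as in the $Y_1$ proof, the non-degeneracy clause $\twobytwo{A'}{0}{0}{B'}\neq O$ is the regime in which $Y_2$ is actually invoked, the degenerate case $A'=B'=0$ being covered by the alternative $X_2$; so this direction is to be understood inside the disjunction $X_2\lor Y_2$.

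For the direction ``$\Rightarrow$'', I would assume $Y_2$, so there are constants $\mu,\mu'$ with $\twobytwo{A'}{0}{0}{B'}\ket{\eta_R}=\mu\ket{\eta_L}$, $\twobytwo{A'}{0}{0}{B'}\ket{\eta_L}=\mu'\ket{\eta_R}$, and $\twobytwo{A'}{0}{0}{B'}\neq O$. Writing $\ket{\eta_\varepsilon}=H_1\ket{\varepsilon}$ with $H_1=\twobytwo{a}{b}{c}{d}$ and left-multiplying by $H_1^\dagger$ turns these into $H_1^\dagger\twobytwo{A'}{0}{0}{B'}H_1\ket{R}=\mu\ket{L}$ and $H_1^\dagger\twobytwo{A'}{0}{0}{B'}H_1\ket{L}=\mu'\ket{R}$. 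Reading off the two entries forced to vanish gives $|a|^2A'+|c|^2B'=0$ and $|b|^2A'+|d|^2B'=0$; using the unitarity relations $|c|=|b|$, $|d|=|a|$, $|a|^2+|b|^2=1$ this becomes $\twobytwo{|a|^2}{1-|a|^2}{1-|a|^2}{|a|^2}\onebytwo{A'}{B'}={\bf 0}$. Since $\twobytwo{A'}{0}{0}{B'}\neq O$ forces $\onebytwo{A'}{B'}\neq{\bf 0}$, the determinant $|a|^4-(1-|a|^2)^2=2|a|^2-1$ must vanish, so $|a|=|b|=1/\sqrt{2}$; feeding this back yields $A'+B'=0$, i.e.\ $\alpha_{2j}\overline{\alpha_{0j}}=-\alpha_{0j}\overline{\alpha_{(-2)j}}$. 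The remaining off-diagonal entries of $H_1^\dagger\twobytwo{A'}{0}{0}{B'}H_1$ only pin down $\mu$ and $\mu'$ and contribute no new constraint.

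Because each step is a transcription of the $Y_1$ computation with the real pair $(A,B)$ replaced by the complex pair $(A',B')$, I do not expect a genuine obstacle. The only delicate points are bookkeeping ones: the clause $\twobytwo{A'}{0}{0}{B'}\neq O$ is what licenses dividing by the $2\times 2$ determinant in the forward direction and what forces $\beta_R\overline{\beta_L}\neq0$ so that the dependence on $C_2$ cancels; and one should contrast this with $Y_1$, where the analogous identity $A+B=0$ couples $|\beta_R|$ and $|\beta_L|$ to the moduli $|\alpha_{jk}|$, whereas here the common complex factor $\beta_R\overline{\beta_L}$ simply drops out -- which is precisely the observation attached to the statement of the lemma.
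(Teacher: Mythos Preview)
Your proposal is correct and follows exactly the route the paper intends: the paper itself offers no detailed proof here, merely stating that the argument is ``a similar discussion to that of the condition $Y_1$,'' and your line-by-line transcription with $(A,B)$ replaced by $(A',B')$ is precisely that. Your additional observation that the common factor $\beta_R\overline{\beta_L}$ cancels (and is nonzero because the non-degeneracy clause of $Y_2$ forces it) is exactly the ``different observation'' the paper alludes to when it says the lemma is free from constraints on $C_2$.
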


\subsection{Fusion of the conditions}
We have shown that a necessary and sufficient condition for $\bfit{T}\in \mathcal{T}$ is $(X_1 \lor Y_1) \land (X_2 \lor Y_2)$ and we have converted $X_j$ and $Y_j$ $(j=1,2)$ to useful expressions in the above discussions. 
Expanding 
\begin{equation}
(X_1 \lor Y_1) \land (X_2 \lor Y_2) = (X_1 \land X_2) \lor (X_1 \land Y_2) \lor (Y_1 \land X_2) \lor (Y_1 \land Y_2),
\nonumber
\end{equation}
we consider each case as follows to finish the proof of  Theorem~\ref{thm:main}.

\begin{center}
\begin{tabular}{|c||c|c|}\hline
& $\begin{array}{c} X_2 \\ \tilde{H_2} \in \bfit{H} \end{array}$
& $\begin{array}{c} Y_2 \\ \scalebox{0.8}[1]{$\alpha_{2j} \overline{\alpha_{0j}} = -\alpha_{0j} \overline{\alpha_{({-2})j}}$} 
\\|a|=|b| \end{array}$ \\ \hline\hline
$\begin{array}{c} X_1 \\ |\beta_R|=|\beta_L|=1/\sqrt{2} \\ |\alpha_{jk}| = 1/\sqrt{3} \end{array}$
& {\LARGE (A)} & {\LARGE (B)} \\ \hline
$\begin{array}{c} Y_1 \\  \scalebox{0.7}[1]{$|\alpha_{2j}|^2|\beta_R|^2 - |\alpha_{0j}|^2|\beta_L|^2  
=-|\alpha_{0j}|^2|\beta_R|^2 + |\alpha_{(-2)j}|^2|\beta_L|^2$} \\  |a|=|b| \end{array} $
& {\LARGE (C)} & {\LARGE (D)} \\ \hline
\end{tabular}

\begin{comment}
\begin{tabular}{|c|c|c|} \hline
 & \,\,\,$X_1$\,\,\, & \,\,\,$Y_1$\,\,\, \\
 & $|\beta_R|=|\beta_L|$ & $|\alpha_{2j}|^2|\beta_R|^2 - |\alpha_{0j}|^2|\beta_L|^2 $ \\
 &${}^\forall j,\,k \in \{0,\, \pm 2\},\,\,|\alpha_{jk}| = 1/\sqrt{3}$&$=-|\alpha_{0j}|^2|\beta_R|^2 + |\alpha_{(-2)j}|^2|\beta_L|^2$\\
 &&{\rm and} \quad $|a|=|b|$ \\
\hline
\,\,\,$X_2$\,\,\, &\raisebox{-10pt}[0pt][0pt]{\LARGE (A)}&\raisebox{-10pt}[0pt][0pt]{\LARGE (B)}\\ 
$\tilde{H_2} \in \bfit{H}$ & \\ \hline
\,\,\,$Y_2$\,\,\, &\raisebox{-15pt}[0pt][0pt]{\LARGE (C)}&\raisebox{-15pt}[0pt][0pt]{\LARGE (D)} \\
$\alpha_{2j} \overline{\alpha_{0j}} = -\alpha_{0j} \overline{\alpha_{({-2})j}}$&& \\
{\rm and}\,\,  $|a|=|b|$ && \\ \hline
\end{tabular} 　
\end{comment}
\end{center}

\begin{enumerate}[(A)]
\item $X_1 \land X_2$
%%%
\begin{lemma}\label{lem:A}
$X_1 \land X_2=\emptyset$
\end{lemma}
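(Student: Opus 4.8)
The plan is to prove the emptiness by a direct inspection of the entries of $\tilde H_2$, showing that $X_1$ and $X_2$ impose mutually exclusive constraints. First I would invoke the two equivalences already established in the preceding subsections: the condition $X_1$ is equivalent to ``$|\alpha_{jk}| = 1/\sqrt{3}$ for all $j,k\in\{0,\pm2\}$ and $|\beta_R|=|\beta_L|=1/\sqrt2$'', and the condition $X_2$ is equivalent to ``$H_2 = \tilde H_2\oplus I_\infty$ with $\tilde H_2\in\bfit{H}$'', where $\bfit{H}$ is the set of $3\times 3$ unitaries displayed in (\ref{eq:H}). So it suffices to argue that these two descriptions of $\tilde H_2$ are incompatible.

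The key observation is purely about the shape of the matrices in $\bfit{H}$: each of the three admissible templates in (\ref{eq:H}) has four of its nine entries equal to $0$. Hence if $\tilde H_2\in\bfit{H}$, then $\alpha_{jk}=0$ for some pair $j,k\in\{0,\pm2\}$. But $X_1$ forces $|\alpha_{jk}| = 1/\sqrt3 \neq 0$ for \emph{every} pair $j,k\in\{0,\pm2\}$, so no entry of $\tilde H_2$ can vanish. These requirements contradict each other, so no quantum walk measurement procedure can satisfy $X_1\land X_2$ simultaneously; that is, $X_1\land X_2=\emptyset$.

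I do not expect any genuine obstacle here — the proof is essentially a one-line contradiction once the two prior characterizations are quoted. The only point requiring a little care is to rely on those already-proved lemmas (the modulus characterization of $X_1$ and the $\bfit{H}$-characterization of $X_2$) rather than re-deriving them, and to note explicitly that every member of $\bfit{H}$ genuinely has a zero entry. This disposes of case (A) in the four-way split $(X_1\land X_2)\lor(X_1\land Y_2)\lor(Y_1\land X_2)\lor(Y_1\land Y_2)$, and the remaining three cases will be handled separately in what follows.
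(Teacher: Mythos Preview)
Your proposal is correct and matches the paper's approach exactly: the paper's proof of this lemma is simply ``It is easy to see that $X_1$ and $X_2$ are contradictory each other,'' and you have merely spelled out the obvious reason (every $\tilde H_2\in\bfit{H}$ has zero entries, while $X_1$ forces all entries to have modulus $1/\sqrt{3}$).
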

%%%
\begin{proof}
It is easy to see that $X_1$ and $X_2$ are contradictory each other. 
\end{proof}
\vspace{1\baselineskip}
\item $X_1 \land Y_2$
\begin{lemma}\label{lem:C}
{\rm The condition $X_1 \land Y_2$ coincides with {\bf (I)}, {\bf (II)} and {\bf (III)-(ii)} in the condition of Theorem~\ref{thm:main} for the case of $|(H_2)_{jk}|=1/\sqrt{3}$ for any $j,k\in\{-2,0,2\}$. }
\end{lemma}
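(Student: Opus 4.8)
The plan is to unfold $X_1$ and $Y_2$ into the equivalent forms already proved above and then to read off {\bf (I)}, {\bf (II)} and {\bf (III)-(ii)} term by term. Writing $H_1=\twobytwo{a}{b}{c}{d}$ as before, one has $a=\braket{R|H_1|R}$ and $b=\braket{R|H_1|L}$, so the clause ``$|a|=|b|$'' occurring in $Y_2$ is literally {\bf (I)}, while the clause ``$|\beta_R|=|\beta_L|=1/\sqrt2$'' occurring in $X_1$ is literally {\bf (II)}. Hence the whole content of the lemma comes down to showing that, under the modulus normalization $|\alpha_{jk}|=1/\sqrt3$ supplied by $X_1$, the remaining arithmetic identity of $Y_2$, namely $\alpha_{2j}\overline{\alpha_{0j}}=-\alpha_{0j}\overline{\alpha_{(-2)j}}$ for each column index $j\in\{0,\pm2\}$, is equivalent to {\bf (III)-(ii)} (whose modulus clause will turn out to be automatic in this regime).

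For the forward direction I would assume $X_1\land Y_2$. From $X_1$ every entry satisfies $|\alpha_{jk}|=1/\sqrt3$, so in particular $|(H_2)_{2k}|=|(H_2)_{(-2)k}|$ holds automatically, which is the first half of {\bf (III)-(ii)}. Since every $\alpha_{jk}$ is then nonzero, writing $\alpha_{jk}=\frac{1}{\sqrt3}\,e^{i\agmnt\alpha_{jk}}$ converts the identity of $Y_2$ into $e^{i(\agmnt\alpha_{2j}-\agmnt\alpha_{0j})}=-e^{i(\agmnt\alpha_{0j}-\agmnt\alpha_{(-2)j})}$, i.e. $\agmnt\alpha_{2j}+\agmnt\alpha_{(-2)j}-2\agmnt\alpha_{0j}\in(2\mathbb{Z}+1)\pi$ for every $j$, which is the phase half of {\bf (III)-(ii)}. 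Together with {\bf (I)} and {\bf (II)} from the first paragraph, this shows that $X_1\land Y_2$ entails {\bf (I)}, {\bf (II)} and {\bf (III)-(ii)} with all moduli equal to $1/\sqrt3$.

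For the converse I would assume {\bf (I)}, {\bf (II)}, {\bf (III)-(ii)} together with $|(H_2)_{jk}|=1/\sqrt3$ for all $j,k$. Then {\bf (II)} and the modulus hypothesis are exactly $X_1$; running the polar computation in reverse, the phase clause of {\bf (III)-(ii)} recovers $\alpha_{2j}\overline{\alpha_{0j}}=-\alpha_{0j}\overline{\alpha_{(-2)j}}$ for every $j$, and this together with {\bf (I)} is exactly $Y_2$. Hence $X_1\land Y_2$ holds, and the two parameter sets coincide. Since $X_1\land Y_2$ is one branch of $(X_1\lor Y_1)\land(X_2\lor Y_2)$, Lemma~\ref{thm:unitarity} then gives $\tilde V^{(j,\,\varepsilon)}\in\kappa^{(j,\,\varepsilon)}\mathrm{U}(2)$ with $\kappa^{(j,\,\varepsilon)}=\|\tilde V^{(j,\,\varepsilon)}\ket{\phi}\|$ independent of $\ket{\phi}$, so $U^{(j,\,\varepsilon)}=\frac{1}{\|V^{(j,\,\varepsilon)}\ket{\phi}\|}\bigl(V^{(j,\,\varepsilon)}\bigr)^{-1}$ is unitary and sends $\ket{\varPhi^{(j,\,\varepsilon)}}$ to $\ket{\phi}$, which is the ``Moreover'' clause of Theorem~\ref{thm:main} in this case.

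The computation is elementary, so I do not anticipate a genuine obstacle; the point needing the most care is the polar step, where one must check that the $\alpha_{jk}$ are genuinely nonzero (so that $\agmnt\alpha_{jk}$ is meaningful) and that the elementary equivalence $e^{i\theta}=-e^{i\theta'}\Longleftrightarrow\theta-\theta'\in(2\mathbb{Z}+1)\pi$ is invoked modulo $2\pi$ in precisely the form in which {\bf (III)-(ii)} is stated. It is also worth saying explicitly that the modulus half of {\bf (III)-(ii)} carries no information in the $1/\sqrt3$ regime, which is exactly why branch (B) produces instances of {\bf (III)-(ii)} but never of {\bf (III)-(i)}.
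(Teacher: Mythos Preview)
Your proposal is correct and follows essentially the same route as the paper: both arguments use the modulus normalization $|\alpha_{jk}|=1/\sqrt{3}$ from $X_1$ to write each entry in polar form and then reduce the $Y_2$ identity $\alpha_{2j}\overline{\alpha_{0j}}=-\alpha_{0j}\overline{\alpha_{(-2)j}}$ to the phase condition $\agmnt\alpha_{2j}+\agmnt\alpha_{(-2)j}-2\agmnt\alpha_{0j}\in(2\mathbb{Z}+1)\pi$, with the converse obtained by running the polar computation in reverse. Your additional remarks (that the modulus clause of {\bf (III)-(ii)} is automatic here, and the link back to the ``Moreover'' clause of Theorem~\ref{thm:main}) are sound elaborations but not a different strategy.
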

\begin{proof}
Let us assume $X_1 \land Y_2$. 
By $X_1$, for $j,\,k \in \{ 0,\,\pm 2\}$,
\begin{eqnarray}
\alpha_{jk} = \frac{e^{i\arg\alpha_{jk}}}{\sqrt{3}}.\nonumber
\end{eqnarray}
We can rewrite $Y_2$ by using it as follows:
\begin{align}
&\frac{1}{\sqrt{3}}\cdot e^{i(\agmnt\alpha_{2j} - \agmnt\alpha_{0j})} = -\frac{1}{\sqrt{3}}\cdot e^{i(\agmnt\alpha_{0j} - \agmnt\alpha_{(-2)j})} \nonumber \\
\Longleftrightarrow\,\, & \agmnt\alpha_{2j} + \agmnt\alpha_{(-2)j} - 2\agmnt\alpha_{0j} \in (2\mathbb{Z}+1)\pi = \{ (2m + 1)\pi  | m \in \mathbb{Z} \}. \nonumber
\end{align}
Therefore, the condition $X_1 \land Y_2$ includes
\begin{align*}
& |a|=|b|;\\
& |\beta_R| = |\beta_L| = \displaystyle\frac{1}{\sqrt{2}};\\
& |\alpha_{jk}| = \displaystyle\frac{1}{\sqrt{3}} \text{ for any } j,k\in\{0,\,\pm2\};\\
& \agmnt\alpha_{2j} + \agmnt\alpha_{(-2)j} - 2\agmnt\alpha_{0j} \in (2\mathbb{Z}+1)\pi \text{ for any } j\in\{0,\,\pm2\};
\end{align*}
The reverse is also true.\vspace{\baselineskip}
\end{proof}
%%%%
\item $Y_1 \land X_2$
\begin{lemma}\label{lem:B}
{\rm The condition $Y_1 \land X_2$ coincides with {\bf (I)},{\bf (II)} and {\bf (III)-(i)} in the condition of  Theorem~\ref{thm:main}.} 
\end{lemma}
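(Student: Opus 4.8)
The plan is to prove the two inclusions of the claimed set equality separately; the heart of the matter is a short finite check running through the three shapes of $\tilde H_2$ collected in the definition of $\bfit H$ in (\ref{eq:H}).

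For the direction ``$Y_1\wedge X_2\Rightarrow$ \textbf{(I)}, \textbf{(II)}, \textbf{(III)}-(i)'': by the equivalent form of $X_2$ established in the previous subsection, $X_2$ is \emph{literally} the statement that $H_2=\tilde H_2\oplus I_\infty$ with $\tilde H_2\in\bfit H$, i.e.\ it is exactly \textbf{(III)}-(i); and $Y_1$ contains the clause $|a|=|b|$, which is exactly \textbf{(I)} (recall $a=\braket{R|H_1|R}$, $b=\braket{R|H_1|L}$). So the only thing left to derive is \textbf{(II)}, namely $|\beta_R|=|\beta_L|=1/\sqrt2$. For this I recall that $Y_1$ also asserts $A+B=0$ for \emph{every} column index $j\in\{2,0,-2\}$, where $A=|\alpha_{2j}|^2|\beta_R|^2-|\alpha_{0j}|^2|\beta_L|^2$ and $B=|\alpha_{0j}|^2|\beta_R|^2-|\alpha_{(-2)j}|^2|\beta_L|^2$. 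The point is that every matrix in $\bfit H$ has a column of the shape $(p,0,q)^{\rm T}$ with $|p|=|q|$, and $p$ (hence $q$) is nonzero by unitarity. Substituting such a column into $A+B=0$ collapses it to $|p|^2(|\beta_R|^2-|\beta_L|^2)=0$, so $|\beta_R|=|\beta_L|$; combined with $|\beta_R|^2+|\beta_L|^2=1$ (which holds because $C_2$ is unitary and $\|\psi\|=1$) this yields \textbf{(II)}.

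For the converse ``\textbf{(I)}, \textbf{(II)}, \textbf{(III)}-(i) $\Rightarrow Y_1\wedge X_2$'': \textbf{(III)}-(i) is exactly $X_2$, and $|a|=|b|$ is one half of $Y_1$, so it remains only to check $A+B=0$ for all $j$. I will verify this column by column for each of the three shapes in $\bfit H$: a column of the form $(*,0,*)^{\rm T}$ gives $A+B=|\alpha_{2j}|^2|\beta_R|^2-|\alpha_{(-2)j}|^2|\beta_L|^2$, which vanishes once $|\beta_R|=|\beta_L|=1/\sqrt2$ because its two nonzero entries have equal modulus, while a column equal to $(0,t,0)^{\rm T}$ gives $A+B=|t|^2(|\beta_R|^2-|\beta_L|^2)=0$ outright. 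Hence $Y_1$ holds in full, and we are done.

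The step I expect to need the most care is exactly the parenthetical claim just used: that on the ``non-special'' columns the identity $|\alpha_{2j}|=|\alpha_{(-2)j}|$ is \emph{automatic} from the unitarity of $\tilde H_2$ together with the defining constraint $|p|=|q|$ (row normalization forces $1-|p|^2=|r|^2$ and $1-|q|^2=|s|^2$, whence $|r|=|s|$), rather than an extra hypothesis — otherwise one would wrongly obtain a condition strictly stronger than \textbf{(III)}-(i). Everything else is routine arithmetic, best organized as a three-case table over the forms listed in $\bfit H$.
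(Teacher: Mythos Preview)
Your argument is correct and follows essentially the same route as the paper: both proofs reduce the claim to checking the column-by-column identity $A+B=0$ for matrices $\tilde H_2\in\bfit H$, and both exploit that such matrices have columns only of the two shapes $(*,0,*)^{\rm T}$ and $(0,t,0)^{\rm T}$. The only organizational difference is that for the forward direction the paper writes out all three column equations $|p\beta_R|=|q\beta_L|$, $|r\beta_R|=|s\beta_L|$, $|\beta_R|=|\beta_L|$ and combines them, whereas you pick a single column of shape $(p,0,q)^{\rm T}$ with $|p|=|q|\neq 0$ and read off $|\beta_R|=|\beta_L|$ directly; your observation that $|r|=|s|$ follows automatically from row normalization together with $|p|=|q|$ is exactly what the paper recovers from its system of equations.
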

\begin{proof}
Let us assume $Y_1\land X_2$. By $Y_1$, the condition 
$|\alpha_{2j}|^2|\beta_R|^2 - |\alpha_{0j}|^2|\beta_L|^2=-|\alpha_{0j}|^2|\beta_R|^2 + |\alpha_{(-2)j}|^2|\beta_L|^2$ holds for any $j\in\{-2,0,2\}$, and by $X_2$, the condition $\tilde{H}_2\in\bfit{H}$ holds. Therefore, by the definition of $\bfit{H}$ in (\ref{eq:H}), 
we obtain
\begin{eqnarray}
|p\beta_R| = |q\beta_L|\quad {\rm and}\quad |r\beta_R| = |s\beta_L|\quad {\rm and} \quad |\beta_R| = |\beta_L|. \nonumber
\end{eqnarray}
Therefore, we can obtain $|\beta_R| = |\beta_L| = 1/\sqrt{2}$ from all of the condition and $|p|=|q|=|r|=|s|$. Hence, the condition $Y_1 \land X_2$ includes
\begin{eqnarray}
\tilde{H}_2 \in  \bfit{H}\quad {\rm and} \quad |\beta_R| = |\beta_L| = \frac{1}{\sqrt{2}} \quad {\rm and} \quad |a|=|b|.\nonumber
\end{eqnarray}
The reverse is also true.
\end{proof}
\par
By this result, there exist permutation matrices $\mathcal{U}$ and $\mathcal{V}$ such that $H_2$ can be expressed by
\begin{eqnarray}
H_2 = 
\mathcal{U}\left[
\begin{array}{@{\,}ccc|c@{\,}}
\frac{1}{\sqrt{2}}e^{i\agmnt\alpha_{j_1 k_1}}&\frac{1}{\sqrt{2}}e^{i\agmnt\alpha_{j_1 k_2}} & 0 &  \\
\frac{1}{\sqrt{2}}e^{i\agmnt\alpha_{j_2 k_1}}&\frac{1}{\sqrt{2}}e^{i\agmnt\alpha_{j_2 k_2}} & 0 & O\\
0 &0 & 1 & \\
\hline
 &O & &I
\end{array}
\right]\mathcal{V}.\nonumber
\end{eqnarray}
In particular, when $j_1=k_1= 2$, $j_2 = k_2 =-2$, $\agmnt\alpha_{22}=\agmnt\alpha_{({-2})2}=\agmnt\alpha_{2({-2})}=0$ and $\agmnt\alpha_{{(-2)}{(-2)}} = \pi$, the result meets the example in paper \cite{WSX17}.\vspace{1\baselineskip}
%%%%%%%
\item $Y_1 \land Y_2$
\begin{lemma}\label{lem:D}
{\rm $Y_1 \land Y_2$ coincides with {\bf (I)}, {\bf (II)} and {\bf (III)-(ii)} in the condition of Theorem~\ref{thm:main}.} 
\end{lemma}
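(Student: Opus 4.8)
The plan is to extract condition {\bf (I)} for free and then reduce everything else to elementary algebra on the moduli and arguments of the entries of $\tilde H_2$, combined with the column‑norm identities coming from unitarity of $\tilde H_2$ and $C_2$. By the lemmas already proved for $Y_1$ and $Y_2$, the condition $Y_1\wedge Y_2$ is the conjunction of the single clause $|a|=|b|$ (which is part of each of $Y_1$ and $Y_2$) with the two families of scalar identities
\begin{align*}
& |\alpha_{2j}|^{2}|\beta_R|^{2}-|\alpha_{0j}|^{2}|\beta_L|^{2}=-|\alpha_{0j}|^{2}|\beta_R|^{2}+|\alpha_{(-2)j}|^{2}|\beta_L|^{2}, \\
& \alpha_{2j}\overline{\alpha_{0j}}=-\alpha_{0j}\overline{\alpha_{(-2)j}},
\end{align*}
required for all $j\in\{0,\pm2\}$. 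The clause $|a|=|b|$ is exactly condition {\bf (I)} of Theorem~\ref{thm:main} (since $a=\braket{R|H_1|R}$, $b=\braket{R|H_1|L}$), so the remaining task is to prove that the two displayed families are jointly equivalent to {\bf (II)} together with {\bf (III)-(ii)}.

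For the forward direction I would first take absolute values in the second family, obtaining $|\alpha_{0j}|\bigl(|\alpha_{2j}|-|\alpha_{(-2)j}|\bigr)=0$ for every $j$. Since the middle row of $\tilde H_2$ is a unit vector, some index $j_0$ has $\alpha_{0j_0}\neq0$, and for it $|\alpha_{2j_0}|=|\alpha_{(-2)j_0}|$. Substituting this and the column identity $|\alpha_{2j_0}|^{2}+|\alpha_{0j_0}|^{2}+|\alpha_{(-2)j_0}|^{2}=1$ into the first family at $j_0$ collapses that equation to $\bigl(|\beta_R|^{2}-|\beta_L|^{2}\bigr)\bigl(1-|\alpha_{2j_0}|^{2}\bigr)=0$; the second factor is strictly positive because $\alpha_{0j_0}\neq0$, hence $|\beta_R|^{2}=|\beta_L|^{2}$, and together with $|\beta_R|^{2}+|\beta_L|^{2}=\|C_2\psi\|^{2}=1$ this gives {\bf (II)}. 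Feeding $|\beta_R|^{2}=|\beta_L|^{2}=1/2$ back into the first family for an arbitrary $j$ reduces it to $|\alpha_{2j}|=|\alpha_{(-2)j}|$, the modulus part of {\bf (III)-(ii)}. Finally, on any column with all three entries nonzero, cancelling the (now equal, nonzero) moduli in the second family and comparing arguments yields $\agmnt\alpha_{2j}+\agmnt\alpha_{(-2)j}-2\agmnt\alpha_{0j}\in(2\mathbb{Z}+1)\pi$, the argument part of {\bf (III)-(ii)}; on a column carrying a zero entry the argument relation is either trivially true or vacuous. This establishes $Y_1\wedge Y_2\Rightarrow${\bf (I)}$\wedge${\bf (II)}$\wedge${\bf (III)-(ii)}.

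For the converse I would run the computations backwards. Condition {\bf (I)} is the $|a|=|b|$ clause of both $Y_1$ and $Y_2$. Using {\bf (II)} the first family becomes $\tfrac12\bigl(|\alpha_{2j}|^{2}-|\alpha_{(-2)j}|^{2}\bigr)=0$, true for every $j$ by the modulus part of {\bf (III)-(ii)}, so $Y_1$ holds. For $Y_2$: on a column with $\alpha_{0j}=0$ both sides of the second identity vanish; on a column with $\alpha_{0j}\neq0$, equality of moduli is the modulus part of {\bf (III)-(ii)} and, writing $\alpha_{jk}=|\alpha_{jk}|e^{i\,\agmnt\alpha_{jk}}$ and using $-1=e^{i\pi}$, equality of arguments is the argument part of {\bf (III)-(ii)}; hence the second identity holds for all $j$. (That one lands in $Y_2$ rather than $X_2$ is automatic once some column has both $\alpha_{2j}$ and $\alpha_{0j}$ nonzero; the remaining, sparse configurations satisfy $\tilde H_2\in\bfit H$, lie in $Y_1\wedge X_2$, and are covered by Lemma~\ref{lem:B}.) Combining the two directions proves the lemma.

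The bilinear manipulations — essentially a rerun of the $Y_1$‑lemma computation for $Y_2$, together with the bookkeeping against the column norms — are not the real difficulty. The delicate point, and the one I would spend the most care on, is configurations of $\tilde H_2$ with vanishing entries: one has to check that each such configuration either already falls under {\bf (III)-(i)} (hence is accounted for by Lemma~\ref{lem:B}, consistent with the ``at least one of (i),(ii)'' phrasing of Theorem~\ref{thm:main}) or makes the argument clause of {\bf (III)-(ii)} vacuous, so that the claimed set equality is not broken at the boundary of the parameter space.
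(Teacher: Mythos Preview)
Your argument is correct and follows essentially the same route as the paper: take absolute values in the $Y_2$ identity to obtain $|\alpha_{2j}|=|\alpha_{(-2)j}|$, substitute into the $Y_1$ identity to factor out $|\beta_R|^2-|\beta_L|^2$ and deduce {\bf (II)}, then read off the phase clause of {\bf (III)-(ii)} from the arguments in $Y_2$. Your treatment is in fact more careful than the paper's at the boundary---the paper simply asserts $|\alpha_{2j}|,|\alpha_{0j}|>0$ for all $j$, whereas you correctly isolate a single $j_0$ with $\alpha_{0j_0}\neq0$ to force $|\beta_R|=|\beta_L|$ and then recover $|\alpha_{2j}|=|\alpha_{(-2)j}|$ for the remaining $j$ from $Y_1$ rather than $Y_2$.
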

\begin{proof}
Let us assume $Y_1\land Y_2$. Taking the absolute values to both sides of the condition $Y_2$, we obtain $|\alpha_{2j}|=|\alpha_{(-2)j}|$ for any $j\in \{-2,0,2\}$. Inserting this into the condition $Y_1$, we have 
    \[ (|\alpha_{2j}|^2+|\alpha_{0j}|^2)(|\beta_R|^2-|\beta_L|^2)=0. \]
Since $|\alpha_{2j}|,|\alpha_{0j}|>0$, we get  $|\beta_R|^2=|\beta_L|^2$. 
In the next, let us consider $Y_2$ with respect to the phase;
the condition $Y_2$ implies 
\[ \agmnt \alpha_{2j}-\agmnt \alpha_{0j}=(2m+1)\pi+\agmnt \alpha_{0j}
-\agmnt \alpha_{(-2)j}  \]
for any $m\in\mathbb{Z}$. 
This implies 
\[ \agmnt \alpha_{2j}- 2\agmnt \alpha_{0j}
+\agmnt \alpha_{(-2)j}\in (2\mathbb{Z}+1)\pi. \]
Therefore, $Y_1 \land Y_2$ includes
\begin{align*}
& |a|=|b|;\\
& |\beta_R| = |\beta_L| = \displaystyle\frac{1}{\sqrt{2}};\\
& |\alpha_{2j}| = |\alpha_{(-2)j}| \text{ for any } j\in\{0,\,\pm2\};\\
& \agmnt\alpha_{2j} + \agmnt\alpha_{(-2)j} - 2\agmnt\alpha_{0j} \in (2\mathbb{Z}+1)\pi \text{ for any } j\in\{0,\,\pm2\};
\end{align*}
The reverse is also true.
\end{proof} 
\vspace{\baselineskip}
\end{enumerate}
Combining all together with Lemmas~\ref{lem:A}--\ref{lem:D}, we complete the proof of Theorem~\ref{thm:main}. 
%%%%%%%%%%%%%%%%%%%%%%%%%%
%%%%%%%%%%%%%%%%%%%%%%%%%%
\section{Summary and Discussion}
In this paper, we extended the scheme of quantum teleportation by quantum walks introduced by Wang et al. \cite{WSX17}. First, we introduced the mathematical definition of the accomplishment of quantum teleportation by this extended scheme. Secondly, we showed a useful necessary and sufficient condition that the quantum teleportation is accomplished rigorously.  Our result classified the parameters of the setting for {the accomplishment of quantum teleportation}. Moreover, we demonstrated some examples of the scheme of the teleportation that is accomplished. Here, we identified the model proposed in the previous study as one of the examples and gave the new models of the teleportation. Moreover, we implied that we can simplify the teleportation in terms of theory and experiment. \par
{In terms of experiment, the example (1) in 4.2 has been realized \cite{CDBP20}. Using Theorem 1, we  covered all the patterns of teleportation scheme via quantum walks on $\mathbb{Z}$ and mathematically suggested that this model is the easiest one to implement. This expectation implies that the model is also the most reliable model from the perspective of accuracy of algorithm. \par
%One of our future work is to check it via machines such as ibmqx2 or ibmqx4.\par
Also, this mathematical structure itself can be discussed or extended. For example, the relationship between the number of possible measurement outcomes $t_1 = \#\{(j,\,\varepsilon)\}$ and that of possible revise operator $t_2 = \#\{ U^{(j,\,\varepsilon)}\}$ is interesting. In this paper, $t_1$ is restricted to 6 ($t_1 = \#(\{\pm 2,\,0\}\times\{R,\,L\})$). Moreover, $t_2=\#\{I_2,\,X,\,Z,\,ZX\}=4<t_1$ for example (1) in 4.2, and $t_2=6=t_1$ for example (2) or (3), both of which are from the case satisfying {\bf (III)-(ii)}. Here one question arises: can we structure some examples that satisfies both {\bf (III)-(ii)} and $t_2<t_1$? Structuring such models will lead us to implement simpler teleportation schemes if we can do. Possibly, one can also think that how the model would be if we extend it so that $t_1>6$. By adding $\ket{\xi_j}$ for $j\notin\{\pm 2,\,0\}$ to $\mathcal{B}_2$, we can extend the number of possible measurement outcomes to $t_1=2d+2$ with $d\geq 3$. This extension is meaningful when we run quantum walks more steps before measurement, and then the scheme of teleportation will be the one which is different from what we explained in this paper. It is interesting whether we can carry on teleportation such that Alice does not simply send information to Bob via the scheme. We would like to treat them as future work from the perspective of both mathematics and application.}
%Our future's work is to implement the scheme of teleportation. Moreover, applications of the properties of quantum walks to the scheme of teleportation is one of the interesting future's problems.

\end{document}